\DeclareMathOperator{\supp}{supp}
\DeclareMathOperator{\Span}{span}
\DeclareMathOperator{\tr}{tr}
\newtheorem{theorem}{Theorem}[section]
\newtheorem{corollary}{Corollary}[section]
\newtheorem{lemma}{Lemma}[section]
\newtheorem{proposition}{Proposition}[section]
\theoremstyle{definition}
\newtheorem{definition}{Definition}[section]
\newtheorem{remark}{Remark}[section]
\title{\LARGE \bf
State Space Decomposition of Quantum Dynamical Semigroups
}
\author{Nicolas Mousset and Nina H. Amini
\thanks{This work was supported by the ANR projects QuanTEdu-France (ANR-22-CMAS-0001), Q-COAST (ANR-19-CE48-0003) and IGNITION (ANR-21- CE47-0015).}
\thanks{N. Mousset is with CNRS, Laboratoire des Signaux et Syst\`{e}mes (L2S), Universit\'{e} Paris-Saclay,
        CentraleSupélec, 91190 Gif-sur-Yvette, France
        {\tt\small nicolas.mousset@centralesupelec.fr}}%
\thanks{N. H. Amini is with CNRS, Laboratoire des Signaux et Syst\`{e}mes (L2S), Universit\'{e} Paris-Saclay,
        CentraleSupélec, 91190 Gif-sur-Yvette, France
        {\tt\small nina.amini@centralesupelec.fr}}%
}
\begin{document}

\maketitle
\thispagestyle{empty}
\pagestyle{empty}

\begin{abstract}

The mean evolution of an open quantum system in continuous time is described by a time continuous semigroup of quantum channels (completely positive and trace-preserving linear maps). Baumgartner and Narnhofer  presented a general decomposition of the underlying Hilbert space into a sum of invariant subspaces, also called enclosures. We propose a new reading of this result, inspired by the work of Carbone and Pautrat. In addition, we apply this decomposition to a class of open quantum random walks and to quantum trajectories, where we study its uniqueness.

\end{abstract}

\section{INTRODUCTION}

Controlling open quantum systems has received significant attention \cite{belavkin1983theory,doherty2000quantum, doherty1999feedback,wiseman2009quantum} and plays an important role in the development of quantum technologies. To systematically control an open quantum system and efficiently analyze stabilization or engineering of states or subspaces, it is necessary to study its long-term behavior. In this regard, the decomposition of the Hilbert space into its recurrent and transient parts, similar to Markov chains, plays a crucial role.

An open quantum system is a quantum system \textit{A} interacting with a second external quantum system \textit{B}, e.g. a reservoir or a probe
(see e.g., \cite{wiseman2009quantum}). When the relaxation time of \textit{A} is very short compared to the correlation time of \textit{B}, the evolution of \textit{A} without any measurement is described by the {\it Markovian master equation}:
$$\frac{d \rho_t}{dt} = \mathcal{L} \left( \rho_t \right),$$
where $\rho_t$ is a density matrix describing the state of system \textit{A}, i.e., a positive matrix with trace one, and $\mathcal{L}$ is the {\it Liouvillian} or {\it Lindladian}. Let $\mathcal{H}$ be the Hilbert space related to \textit{A}, $\mathcal{B(H)}$ the set of linear operators on $\mathcal{H}$ and $\mathcal{D(H)}$ the set of density matrix in $\mathcal{B(H)}$. This way, $\mathcal{L}$ is a linear superoperator on $\mathcal{B(H)}$. The time evolution of a solution of this equation can be described by a semigroup $\left( \Phi_t \right)_{t\ge0}$ of \textit{completely positive trace-preserving} maps that is assumed \textit{continuous} in time \cite{wolf2012quantum}. The semigroup property means that:
$$\Phi_{t+s} = \Phi_t \circ \Phi_s.$$
Throughout this paper, we assume that $\mathcal{H}$ is \textit{finite-dimensional}. This way, we simply have:
$\Phi_t = e^{t \mathcal{L}}$.
The Lindbladian can be expressed explicitly as a conditional completely positive map (see e.g., \cite[Proposition 7.2]{wolf2012quantum}):
\begin{equation*}
    \mathcal{L} = - \imath [H, \cdot] + \sum_{j=1}^n L_j \cdot L_j^\dag - \frac{1}{2} \left\{ L_j^\dag L_j, \cdot \right\},
\end{equation*}
where $H$ is the Hamiltonian and $L_j$ are operators that can be associated to measuring instruments.

The decomposition of the Hilbert space with respect to the semigroup $\left( \Phi_t \right)_{t\geq 0}$ into invariant subspaces for finite-dimensional Hilbert spaces was fully formalized in \cite{baumgartner2012structures}. Later, in \cite{carbone2016open}, this result was extended to the case of open quantum random walks, which address the infinite-dimensional setting. It was further shown that this approach includes a broader class than that of open quantum random walks \cite{carbone2016irreducible}. Subsequently, these results found applications in demonstrating the selection of invariant subspaces by quantum trajectories \cite{ amini2021asymptotic, amini2024exponential,benoist2024exponentially}.

In this paper, we provide a second reading of the works in \cite{baumgartner2012structures, carbone2016irreducible}. Although this theory is well known, we present it again to help clarify these fundamental results. We also detail some proofs, either to simplify them or to precise some important intermediate steps. It should be noted that there are two ways of expressing the decomposition: one that is described in \cite{baumgartner2012structures}, see Theorem 7, but also in \cite{wolf2012quantum}, see Theorem 6.14, and another one described in \cite{carbone2016irreducible}, see Theorem 7.1. We present here an adaptation of \cite{carbone2016irreducible} to the finite-dimensional and continuous-time case, adaptation based largely on \cite{baumgartner2012structures}. In this way, we close the loop opened by these two papers. In the second part of this paper, we apply this theory to a special class of \textit{open quantum random walks} (\textit{OQRW}). Finally, we show that the \textit{identifiability assumption} used to establish the selection of minimal invariant subspaces of quantum trajectories (see \cite{amini2021asymptotic, amini2024exponential,benoist2014large}) provides a sufficient condition for the uniqueness of the decomposition.

\section{Preliminary notions}

We begin with some definitions that will be central for the decomposition of the Hilbert space.

\subsection{Enclosures and invariant states}

\textit{Enclosures}, also called \textit{invariant subspaces}, are subspaces from which it is impossible to get out. 

\begin{definition} [\cite{baumgartner2012structures}]
    A subspace $\mathcal{V} \subseteq \mathcal{H}$ is an {\it enclosure} if for any density matrix $\rho \in \mathcal{D(H)}$ with support in $\mathcal{V}$, for all $t \in \mathbb{R}_+$, $\Phi_t (\rho)$ has also support in $\mathcal{V}$.
\end{definition}

A particular case of enclosures is of special interest for us: the \textit{minimal enclosures} or \textit{minimal invariant subspaces}.

\begin{definition} [\cite{baumgartner2012structures}]
    A non-trivial enclosure is a \textit{minimal enclosure} if the only non-trivial enclosure it contains is itself.
\end{definition}


The notion of enclosure is deeply related to the notion of \textit{invariant state}, also called \textit{steady state}, of the semigroup.

\begin{definition}
    A state $\rho \in \mathcal{D(H)}$ is an \textit{invariant state} if for all $t \in \mathbb{R}_+$, $\Phi_t (\rho) = \rho$.
\end{definition}




It should be noted that in finite dimension a quantum dynamical semigroup always has at least one invariant state \cite{wolf2012quantum}. Exactly as before, a particular case of invariant states will be of special interest for us: the \textit{extremal invariant states}.

\begin{definition}
    An invariant state is \textit{extremal} if it is an extremal point of the convex set of invariant states.
\end{definition}

\subsection{Recurrent and transient subspaces}

The first stage of the decomposition of the Hilbert space $\mathcal{H}$ consists in the separation into two parts: one that will always be reach as $t$ goes to infinity whatever the initial density matrix and its orthogonal.

\begin{definition} [\cite{carbone2016irreducible}]
    The \textit{recurrent subspace} $\mathcal{R}$ is defined as follows:
    \begin{equation*}
        \mathcal{R} := \sup \left\{ \supp(\rho), \ \rho \in \mathcal{D}(\mathcal{H}), \ \rho \ \mathrm{invariant} \right\}.
    \end{equation*}
    Then, the \textit{transient subspace} $\mathcal{D}$ is defined as: $\mathcal{D} = \mathcal{R}^\perp$.
\end{definition}



\section{Decomposition of the Hilbert space}

The objective of the decomposition of the Hilbert space associated to a quantum dynamical semigroup is to have a complete knowledge of the behavior in large time of the mathematical and physical notions related to this semigroup. In this section, we focus on the theoretical propositions while the next sections will present some applications. 

This section is divided into two parts: the proof of the existence of a decomposition and the characterization of the uniqueness of this decomposition.

\subsection{Orthogonal decomposition of $\mathcal{R}$}

By definition, the Hilbert space can be decomposed into two orthogonal parts: $\mathcal{H} = \mathcal{D} \oplus \mathcal{R}$. We can specify the form of $\mathcal{R}$ using the notion of minimal enclosure.

\begin{theorem} [\cite{baumgartner2012structures}] \label{Thm_First_Decomp_R}
    The subspace $\mathcal{R}$ can be decomposed into a direct sum of mutually orthogonal subspaces, where each one of them is a minimal enclosure.
\end{theorem}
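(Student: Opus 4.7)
The plan is to iteratively extract minimal enclosures from $\mathcal{R}$, showing at each step that the orthogonal complement (within $\mathcal{R}$) of the accumulated direct sum remains an enclosure. Since $\dim \mathcal{H} < \infty$, the process terminates and produces the desired decomposition.

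Two preparatory observations will be used. First, $\mathcal{R}$ is itself an enclosure: the support of any invariant state is an enclosure, and a strictly positive convex combination of a generating family of invariant states yields a single invariant state $\rho_\infty$ with $\supp(\rho_\infty) = \mathcal{R}$. Second, any non-trivial enclosure $\mathcal{V} \subseteq \mathcal{R}$ contains a minimal enclosure, obtained by descending a chain of sub-enclosures that must stabilize in finite dimension.

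The central lemma states that if $\mathcal{V} \subseteq \mathcal{R}$ is an enclosure, then $\mathcal{R} \cap \mathcal{V}^\perp$ is also an enclosure. I would prove it via the Heisenberg-picture characterization: since $\Phi_t^*$ is unital and completely positive, the enclosure condition for $\mathcal{V}$ is equivalent to the operator inequality $\Phi_t^*(P_\mathcal{V}) \geq P_\mathcal{V}$ for every $t \geq 0$. Writing $K_t^\mathcal{V} := \Phi_t^*(P_\mathcal{V}) - P_\mathcal{V} \geq 0$, the invariance of $\rho_\infty$ forces $\tr(\rho_\infty K_t^\mathcal{V}) = 0$, and the faithfulness of $\rho_\infty$ on $\mathcal{R}$ then implies that $K_t^\mathcal{V}$ is supported entirely in $\mathcal{R}^\perp$. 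Applying this both to $\mathcal{V}$ and to $\mathcal{R}$, and using operator monotonicity of $\Phi_t^*$ (from $P_\mathcal{V} \leq P_\mathcal{R}$) to conclude $K_t^\mathcal{V} \leq K_t^\mathcal{R}$ on $\mathcal{R}^\perp$, we obtain
\[
\Phi_t^*(P_{\mathcal{R} \cap \mathcal{V}^\perp}) - P_{\mathcal{R} \cap \mathcal{V}^\perp} = K_t^\mathcal{R} - K_t^\mathcal{V} \geq 0,
\]
which is precisely the enclosure condition for $\mathcal{R} \cap \mathcal{V}^\perp$.

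To iterate cleanly I would first record a corollary of the same dual characterization: any orthogonal direct sum of enclosures contained in $\mathcal{R}$ is itself an enclosure, because for $\mathcal{V}_1 \perp \mathcal{V}_2$ inside $\mathcal{R}$ one has $\Phi_t^*(P_{\mathcal{V}_1 \oplus \mathcal{V}_2}) - P_{\mathcal{V}_1 \oplus \mathcal{V}_2} = K_t^{\mathcal{V}_1} + K_t^{\mathcal{V}_2} \geq 0$. With this in hand, the induction proceeds: pick a minimal enclosure $\mathcal{V}_1 \subseteq \mathcal{R}$, then inductively a minimal enclosure $\mathcal{V}_{k+1} \subseteq \mathcal{R} \cap (\mathcal{V}_1 \oplus \cdots \oplus \mathcal{V}_k)^\perp$, which exists by the central lemma applied to the enclosure $\mathcal{V}_1 \oplus \cdots \oplus \mathcal{V}_k$ together with finite-dimensionality, until the orthogonal complement is trivial. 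The main obstacle I anticipated was showing that the intermediate complements remain enclosures without invoking an auxiliary faithful invariant state on each sub-recurrent space; the dual-map formulation handles this uniformly through the positivity of the operators $K_t^\bullet$ on $\mathcal{R}^\perp$. The procedure terminates after finitely many steps and yields the orthogonal decomposition $\mathcal{R} = \bigoplus_k \mathcal{V}_k$ into minimal enclosures.
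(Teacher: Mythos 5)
Your argument is correct and follows essentially the same route as the paper: establish that $\mathcal{R}$ carries a faithful invariant state and is an enclosure, show that the orthogonal complement within $\mathcal{R}$ of any enclosure is again an enclosure, and then peel off minimal enclosures one at a time until the complement is trivial. The only difference is that you supply a self-contained Heisenberg-picture proof (via $\Phi_t^*(P_\mathcal{V}) \geq P_\mathcal{V}$ and the support of $K_t^{\mathcal{V}}$ in $\mathcal{R}^\perp$) of the key complementation lemma, which the paper states as Proposition \ref{Prop_Phi_Block_by_Block} and imports from \cite{baumgartner2012structures} without proof.
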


The proof of this Theorem is done in two steps in \cite{baumgartner2012structures}: first, they show that $\mathcal{R}$ is an enclosure, second, they prove that the semigroup maps "block by block". For the sake of completeness, we recall the main steps. We restate the first step with a new proof.
\begin{proposition} [\cite{baumgartner2012structures}]
    The support of an invariant state is an enclosure. In particular, $\mathcal{R}$ is an enclosure.
\end{proposition}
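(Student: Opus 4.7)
The plan is to prove the first sentence directly and then obtain the statement about $\mathcal{R}$ as a consequence, by exhibiting a single invariant state whose support equals $\mathcal{R}$.

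For the first sentence, let $\rho$ be an invariant state and let $\sigma \in \mathcal{D}(\mathcal{H})$ with $\supp(\sigma)\subseteq \supp(\rho)$. The key observation, valid in finite dimension, is that one can find $\lambda > 0$ such that $\rho - \lambda \sigma \geq 0$: indeed, restricted to $\supp(\rho)$ the operator $\rho$ is strictly positive and hence has a smallest eigenvalue $\mu > 0$, while $\sigma$ viewed on the same subspace is bounded by $\|\sigma\|$; any $\lambda \leq \mu/\|\sigma\|$ works. Since $\Phi_t$ is positive and $\rho$ is invariant,
\begin{equation*}
\rho - \lambda\, \Phi_t(\sigma) \;=\; \Phi_t(\rho - \lambda \sigma) \;\geq\; 0,
\end{equation*}
so $0 \leq \lambda\, \Phi_t(\sigma) \leq \rho$. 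For any $\psi \in \ker(\rho)$, the inequality $\langle \psi, \Phi_t(\sigma)\psi\rangle \leq \lambda^{-1}\langle\psi,\rho\psi\rangle = 0$ combined with $\Phi_t(\sigma) \geq 0$ forces $\Phi_t(\sigma)\psi = 0$. Hence $\ker(\rho) \subseteq \ker(\Phi_t(\sigma))$, equivalently $\supp(\Phi_t(\sigma)) \subseteq \supp(\rho)$, proving that $\supp(\rho)$ is an enclosure.

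For the second sentence, I would argue that $\mathcal{R}$ is itself the support of some invariant state. The set of invariant states is convex: if $\rho_1, \rho_2$ are invariant, so is $\tfrac12(\rho_1+\rho_2)$, whose support equals $\supp(\rho_1) + \supp(\rho_2)$. Thus the collection $\mathcal{F} := \{\supp(\rho) : \rho \in \mathcal{D}(\mathcal{H}), \rho \text{ invariant}\}$ is closed under taking sums of finitely many elements. Since $\mathcal{H}$ is finite-dimensional, there is an element of $\mathcal{F}$ of maximal dimension, say $\supp(\rho^\star)$, and by closure under sums this must contain every other element of $\mathcal{F}$. Therefore $\supp(\rho^\star) = \mathcal{R}$, and the first sentence applied to $\rho^\star$ gives that $\mathcal{R}$ is an enclosure.

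The only mildly delicate step is the existence of the scaling constant $\lambda$; once one notices that $\sigma$ lives in the subspace on which $\rho$ is strictly positive, the rest reduces to elementary manipulations with the positive-semidefinite order and complete positivity of $\Phi_t$. No infinite-dimensional subtleties arise because the maximum in the construction of $\rho^\star$ is attained in finite dimension.
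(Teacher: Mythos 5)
Your proof is correct. For the second sentence it follows essentially the same route as the paper: both arguments exploit convexity of the set of invariant states together with the identity $\supp(\rho_1+\rho_2)=\supp(\rho_1)+\supp(\rho_2)$ to produce a single invariant state whose support is all of $\mathcal{R}$ --- the paper iterates ``enlarge the support until it stabilizes,'' you pick a maximal-dimension element of the family of supports, and in finite dimension these are the same argument. Where you go beyond the paper is the first sentence: the paper explicitly declines to prove that the support of an invariant state is an enclosure (``we only prove that $\mathcal{R}$ is an enclosure'') and defers to the cited reference, even though that statement is exactly what is needed to pass from ``$\mathcal{R}$ is the support of an invariant state'' to ``$\mathcal{R}$ is an enclosure.'' Your domination argument --- choose $\lambda>0$ with $\rho-\lambda\sigma\ge 0$ using the spectral gap of $\rho$ on its support, push through $\Phi_t$ by positivity and invariance, and read off $\supp\left(\Phi_t(\sigma)\right)\subseteq\supp(\rho)$ from $0\le\lambda\,\Phi_t(\sigma)\le\rho$ --- is the standard one and is sound; note in passing that it only uses positivity of $\Phi_t$, not complete positivity. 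So your write-up is self-contained where the paper's is not.
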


\begin{proof}
    We only prove that $\mathcal{R}$ is an enclosure. Let $\rho_0$ be an invariant state. Either all the other invariant states have their support included in the support of $\rho_0$, or there is a $\rho_1$ for which it is not the case. Note that $\left(\rho_0 + \rho_1 \right)/2$ is also an invariant state, with a bigger support. By repeating this operation a finite number of times, we obtain an invariant state whose support is exactly $\mathcal R$, by definition of the recurrent subspace. This way, $\mathcal{R}$ is an enclosure.
\end{proof}

\vspace{2mm}

The next proposition summarizes the second step of the proof. We confuse $\mathcal{H}$ with $\mathbb{C}^n$ and $\mathcal{B(H)}$ with $\mathcal{M}_n(\mathbb{C})$.

\begin{proposition} [\cite{baumgartner2012structures}] \label{Prop_Phi_Block_by_Block}
    Consider an enclosure $\mathcal{V} \subseteq \mathcal{R}$. Then $\mathcal{W} := \mathcal{V}^\perp \cap \mathcal{R}$ is also an enclosure. Moreover, when restricted to $\mathcal{R}$, $\left( \Phi_t \right)_{t\ge0}$ acts as follows. Let $\rho \in\mathcal{D(H)}$ be a state with support in $\mathcal{R}$, then:
    \begin{equation*}
        \Phi_t \left( \begin{pmatrix} \rho_{\mathcal{V},\mathcal{V}} & \rho_{\mathcal{V},\mathcal{W}} \\ \rho_{\mathcal{W},\mathcal{V}} & \rho_{\mathcal{W},\mathcal{W}} \end{pmatrix} \right) = \begin{pmatrix} \Phi_t \left( \rho_{\mathcal{V},\mathcal{V}} \right) & \Phi_t \left( \rho_{\mathcal{V},\mathcal{W}} \right) \\ \Phi_t \left( \rho_{\mathcal{W},\mathcal{V}} \right) & \Phi_t \left(  \rho_{\mathcal{W},\mathcal{W}} \right) \end{pmatrix},
    \end{equation*}
    where, for example, $\rho_{\mathcal{V},\mathcal{W}} = P_\mathcal{V} \rho P_\mathcal{W}$, $P_\mathcal{V}$, resp. $P_\mathcal{W}$, being the Hermitian projector onto $\mathcal{V}$, resp. $\mathcal{W}$.
\end{proposition}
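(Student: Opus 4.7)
The approach I would take is to fix an arbitrary $t \geq 0$, choose a Kraus representation $\Phi_t(\rho) = \sum_\alpha K_\alpha \rho K_\alpha^\dag$ (with $\sum_\alpha K_\alpha^\dag K_\alpha = I$), and prove that, once restricted to $\mathcal{R}$, each $K_\alpha$ is block-diagonal in the decomposition $\mathcal{R} = \mathcal{V} \oplus \mathcal{W}$. Both assertions of the proposition then drop out: block-diagonality gives $K_\alpha \mathcal{W} \subseteq \mathcal{W}$, so that $\mathcal{W}$ is an enclosure, while the computation $\Phi_t(\rho_{\mathcal{X},\mathcal{Y}}) = \sum_\alpha \hat K_\alpha \rho_{\mathcal{X},\mathcal{Y}} \hat K_\alpha^\dag$ (for $\mathcal{X},\mathcal{Y} \in \{\mathcal{V},\mathcal{W}\}$) keeps each block in its own place, yielding the announced matrix form for $\Phi_t$.

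The first step is easy and uses only the enclosure properties of $\mathcal{V}$ and $\mathcal{R}$: for $|\phi\rangle \in \mathcal{V}$, $\Phi_t(|\phi\rangle\langle\phi|) = \sum_\alpha (K_\alpha|\phi\rangle)(K_\alpha|\phi\rangle)^\dag$ has support in $\mathcal{V}$, forcing every $K_\alpha|\phi\rangle$ to lie in $\mathcal{V}$; the same argument applied to $\mathcal{R}$ gives $K_\alpha \mathcal{R} \subseteq \mathcal{R}$. Denoting by $\hat K_\alpha$ the restriction of $K_\alpha$ to $\mathcal{R}$, this produces an upper-triangular form with diagonal blocks $A_\alpha : \mathcal{V} \to \mathcal{V}$, $B_\alpha : \mathcal{W} \to \mathcal{W}$ and an upper-right block $C_\alpha : \mathcal{W} \to \mathcal{V}$. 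Reading the trace-preservation $\sum_\alpha \hat K_\alpha^\dag \hat K_\alpha = I_\mathcal{R}$ in the $\mathcal{W}$-corner yields in particular $\sum_\alpha (C_\alpha^\dag C_\alpha + B_\alpha^\dag B_\alpha) = I_\mathcal{W}$.

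The substantive step — and the main obstacle — is then to show that $C_\alpha = 0$ for all $\alpha$, i.e.\ that the off-diagonal coupling vanishes. For this I would pick, by definition of $\mathcal{R}$, an invariant state $\tau$ with $\supp(\tau) = \mathcal{R}$, so that $\tau_{\mathcal{W},\mathcal{W}} := P_\mathcal{W} \tau P_\mathcal{W}$ is faithful on $\mathcal{W}$. Extracting the $(\mathcal{W},\mathcal{W})$-block of the invariance equation $\tau = \sum_\alpha \hat K_\alpha \tau \hat K_\alpha^\dag$ using the upper-triangular form gives $\tau_{\mathcal{W},\mathcal{W}} = \sum_\alpha B_\alpha \tau_{\mathcal{W},\mathcal{W}} B_\alpha^\dag$; taking the trace and inserting the identity above,
\[
\tr(\tau_{\mathcal{W},\mathcal{W}}) = \tr\!\Bigl(\tau_{\mathcal{W},\mathcal{W}} \sum_\alpha B_\alpha^\dag B_\alpha\Bigr) = \tr(\tau_{\mathcal{W},\mathcal{W}}) - \sum_\alpha \tr\bigl( (C_\alpha \tau_{\mathcal{W},\mathcal{W}}^{1/2})^\dag (C_\alpha \tau_{\mathcal{W},\mathcal{W}}^{1/2}) \bigr).
\]
Each nonnegative term must therefore vanish, so $C_\alpha \tau_{\mathcal{W},\mathcal{W}}^{1/2} = 0$, and faithfulness of $\tau_{\mathcal{W},\mathcal{W}}$ on $\mathcal{W}$ forces $C_\alpha = 0$. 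This is precisely the point where the existence of an invariant state faithful on $\mathcal{R}$ and the trace-preservation of $\Phi_t$ combine to upgrade the one-sided enclosure property of $\mathcal{V}$ into the two-sided orthogonal decomposition we need.
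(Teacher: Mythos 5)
Your proof is correct, and all the delicate points are handled: the reduction of trace preservation to $\sum_\alpha \hat K_\alpha^\dag \hat K_\alpha = I_{\mathcal R}$ (which needs $K_\alpha\mathcal R\subseteq\mathcal R$, which you establish), the existence of an invariant state faithful on $\mathcal R$ (guaranteed by the averaging argument in the preceding proposition of the paper), and the positivity argument $\sum_\alpha\tr\bigl(C_\alpha^\dag C_\alpha\,\tau_{\mathcal W,\mathcal W}\bigr)=0\Rightarrow C_\alpha=0$. The paper itself does not reprove this proposition but defers to Baumgartner--Narnhofer, who argue at the infinitesimal level: an enclosure is characterized by conditions of the form $L_jP_{\mathcal V}=P_{\mathcal V}L_jP_{\mathcal V}$ (plus a constraint on $H$), and the faithful invariant state on $\mathcal R$ is then used to upgrade these to the two-sided relations $P_{\mathcal V}L_j=P_{\mathcal V}L_jP_{\mathcal V}$, from which the block structure of $\mathcal L$, hence of $\Phi_t=e^{t\mathcal L}$, follows. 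Your version works instead with a Kraus decomposition of each fixed $\Phi_t$ (closer in spirit to the discrete-time proofs of Wolf and of Carbone--Pautrat); this makes the argument self-contained, avoids any appeal to the Lindblad form of the generator, and extends verbatim to discrete-time channels, at the small cost of re-deriving the block structure separately for each $t$ rather than once for $\mathcal L$. The key mechanism --- trace preservation plus a faithful invariant state on $\mathcal R$ killing the $\mathcal W\to\mathcal V$ coupling --- is the same in both routes.
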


\paragraph*{Proof of Theorem \ref{Thm_First_Decomp_R}}
    Given a minimal enclosure $\mathcal{V}_1 \subseteq \mathcal{R}$, the subspace $\mathcal{X}_2 = \mathcal{V}_1^\perp \cap \mathcal{R}$ is also an enclosure. Either it is minimal, which finishes the proof, or it contains a minimal enclosure $\mathcal{V}_2$. Repeating this stage with $\mathcal{V}_2$ and so on and so forth finishes the proof. \hfill $\square$

\vspace{2mm}

Finally, note that these results can also be interpreted in terms of invariant states. 

\begin{proposition} [\cite{baumgartner2012structures}] \label{Prop_Inv_States_Support}
    \begin{enumerate}
        \item Each enclosure inside of $\mathcal{R}$ is the support of an invariant state.
        \item Different extremal invariant states have different supports.
        \item The support of an extremal invariant state is a minimal enclosure inside of $\mathcal{R}$ and vice-versa.
    \end{enumerate}
\end{proposition}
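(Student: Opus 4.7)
Plan: The three statements will follow from two ingredients — the block-by-block action of $\Phi_t$ in Proposition~\ref{Prop_Phi_Block_by_Block} and a uniform ``subtraction trick'' that produces invariant states of strictly smaller support. I would handle them in the order (1), (2), (3)$\Leftarrow$, (3)$\Rightarrow$.

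For (1), I would start from the invariant state $\rho_0$ with $\supp\rho_0=\mathcal{R}$ supplied by the previous proposition. Given an enclosure $\mathcal{V}\subseteq\mathcal{R}$, the block form of $\Phi_t$ on the decomposition $\mathcal{R}=\mathcal{V}\oplus(\mathcal{V}^\perp\cap\mathcal{R})$ immediately forces $P_\mathcal{V}\rho_0 P_\mathcal{V}$ to be invariant; it is positive and has support exactly $\mathcal{V}$ because $\rho_0$ fills $\mathcal{R}$, so its normalisation is the required invariant state.

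For (2), I assume two extremal invariant states $\rho_a,\rho_b$ share a support $\mathcal{V}$ and set
\[
c=\min_{v\in\mathcal{V}\setminus\{0\}}\frac{\langle v,\rho_a v\rangle}{\langle v,\rho_b v\rangle}.
\]
Full support of $\rho_b$ on $\mathcal{V}$ gives $c>0$ and comparing traces yields $c\le 1$. The case $c=1$ gives $\rho_a-\rho_b\ge 0$ traceless, hence $\rho_a=\rho_b$. In the case $c<1$ the operator $\tau=(\rho_a-c\rho_b)/(1-c)$ is an invariant density matrix with $\supp\tau\subsetneq\mathcal{V}$ by the very choice of $c$, so $\rho_a=c\rho_b+(1-c)\tau$ is a nontrivial convex combination of two distinct invariant states, contradicting the extremality of $\rho_a$.

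For (3)$\Leftarrow$, (1) supplies an invariant state $\rho$ of support the minimal $\mathcal{V}$. In any convex decomposition $\rho=\lambda\rho_a+(1-\lambda)\rho_b$ with $\lambda\in(0,1)$, minimality of $\mathcal{V}$ and the fact that the support of an invariant state is an enclosure force $\supp\rho_a=\supp\rho_b=\mathcal{V}$; running the subtraction trick of (2) on $\rho_a,\rho_b$ now has to end in $c=1$, since the alternative would produce an enclosure strictly inside $\mathcal{V}$ and contradict minimality, hence $\rho_a=\rho_b=\rho$ and $\rho$ is extremal. For (3)$\Rightarrow$, suppose $\rho$ is extremal with support $\mathcal{V}$ and $\mathcal{V}_1\subsetneq\mathcal{V}$ is a strict sub-enclosure; by (1) there is an invariant state $\sigma_1$ with support $\mathcal{V}_1$, and positive-definiteness of $\rho$ on $\mathcal{V}$ ensures $\rho-\epsilon\sigma_1\ge 0$ for small $\epsilon>0$, so $\tau=(\rho-\epsilon\sigma_1)/(1-\epsilon)$ is an invariant state of support still $\mathcal{V}$, and the nontrivial decomposition $\rho=\epsilon\sigma_1+(1-\epsilon)\tau$ with $\sigma_1\ne\tau$ contradicts extremality. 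The main technical hurdle will be ensuring the subtraction trick actually yields a strictly smaller support at the right moment — either by choosing the scaling to touch the positivity boundary (the $\min$-ratio $c$) or by keeping it small enough to preserve positivity (the $\epsilon$-perturbation) — both relying on the maximality of the support of the starting state obtained from (1).
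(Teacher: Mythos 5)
Your argument is correct. Note that the paper itself gives no proof of this proposition: it is stated as a citation of Baumgartner--Narnhofer, so there is no in-text argument to compare against. What you have written is a sound, self-contained reconstruction of the standard proof, and it uses exactly the ingredients the paper makes available: the full-support invariant state on $\mathcal{R}$ built in the proof that $\mathcal{R}$ is an enclosure, the block-by-block action of $\Phi_t$ from Proposition~\ref{Prop_Phi_Block_by_Block} (which gives invariance of the compression $P_\mathcal{V}\rho_0 P_\mathcal{V}$), and the fact that the support of an invariant state is an enclosure. The two perturbation devices (the extremal ratio $c$ touching the positivity boundary, and the small-$\epsilon$ subtraction preserving full support) are both legitimate in finite dimension: the minimum of the Rayleigh-type ratio is attained on the compact unit sphere of $\mathcal{V}$ because $\rho_b$ is positive definite there, and for a positive semidefinite operator a vanishing quadratic form at the minimizer forces that vector into the kernel, which is what makes $\supp\tau\subsetneq\mathcal{V}$. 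The only point worth stating explicitly in a final write-up is that in part (3)$\Leftarrow$ the convex decomposition $\rho=\lambda\rho_a+(1-\lambda)\rho_b$ is taken within the convex set of \emph{invariant} states (so that $\supp\rho_a$ is again an enclosure); you clearly intend this, since extremality is defined relative to that set.
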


To summarize, the recurrent subspace can always be decomposed into an orthogonal sum of minimal enclosures. To study the uniqueness of this decomposition, we introduce the \textit{cut off evolution semigroup}.

\subsection{The cut off evolution semigroup}

The cut off evolution semigroup $\left( S_t \right)_{t\ge0}$ corresponds to the adjoint semigroup $\left( \Phi_t^* \right)_{t\ge0}$ restricted to the recurrent subspace.

\begin{definition} [\cite{baumgartner2012structures}]
    Given an operator $A \in \mathcal{B(H)}$ with support in $\mathcal{R}$, for all $t \ge 0$, $S_t$ is defined as:
    \begin{equation*}
        \mathcal{S}_t(A) := P_\mathcal{R} \Phi_t^* (A) P_\mathcal{R}.
    \end{equation*}
\end{definition}

\begin{remark}
    As its name suggests, the cut off evolution semigroup is a semigroup, see \cite{baumgartner2012structures}.
\end{remark}

As for the original semigroup, the cut off evolution semigroup maps block by block:

\begin{proposition} [\cite{baumgartner2012structures}]
    Consider two enclosures $\mathcal{V} \subseteq \mathcal{R}$ and $\mathcal{W} = \mathcal{V}^\perp \cap \mathcal{R}$. Then, for any operator $A \in \mathcal{B(H)}$ and any $t \ge 0$:
    \begin{equation*}
        S_t \left( \begin{pmatrix} A_{\mathcal{V},\mathcal{V}} & A_{\mathcal{V},\mathcal{W}} \\ A_{\mathcal{W},\mathcal{V}} & A_{\mathcal{W},\mathcal{W}} \end{pmatrix} \right) = \begin{pmatrix} S_t \left( A_{\mathcal{V},\mathcal{V}} \right) & S_t \left( A_{\mathcal{V},\mathcal{W}} \right) \\ S_t \left( A_{\mathcal{W},\mathcal{V}} \right) & S_t \left(  A_{\mathcal{W},\mathcal{W}} \right) \end{pmatrix},
    \end{equation*}
    with, for example, $A_{\mathcal{V},\mathcal{W}} = P_\mathcal{V} A P_\mathcal{W}$.
\end{proposition}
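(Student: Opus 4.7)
The plan is to deduce the block-by-block action of $(S_t)$ from the corresponding property of $(\Phi_t)$ in Proposition~\ref{Prop_Phi_Block_by_Block} via Hilbert--Schmidt duality. Starting from the definition of $S_t$, using that $P_\mathcal{R}$ is Hermitian and that $\Phi_t^*$ is the Hilbert--Schmidt adjoint of $\Phi_t$, for any $A, B \in \mathcal{B}(\mathcal{H})$ one has
\begin{equation*}
\tr\bigl(S_t(A)\, B\bigr) = \tr\bigl(\Phi_t^*(A)\, P_\mathcal{R} B P_\mathcal{R}\bigr) = \tr\bigl(A\, \Phi_t(P_\mathcal{R} B P_\mathcal{R})\bigr).
\end{equation*}
Since $P_\mathcal{R} B P_\mathcal{R}$ has support in $\mathcal{R}$ and, by linearity, the statement of Proposition~\ref{Prop_Phi_Block_by_Block} extends from density matrices to every operator supported in $\mathcal{R}$, I can decompose $\Phi_t(P_\mathcal{R} B P_\mathcal{R})$ as the sum, over $X, Y \in \{\mathcal{V}, \mathcal{W}\}$, of the four pieces $\Phi_t(P_X B P_Y)$, each lying entirely in the $(X, Y)$ block.

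I then fix one block of $A$ at a time, for instance $A_0 = P_\mathcal{V} A P_\mathcal{W}$, and plug it into the identity above. Because $A_0$ is supported in the $(\mathcal{V}, \mathcal{W})$ block, the trace against each $\Phi_t(P_X B P_Y)$ vanishes unless $(X, Y) = (\mathcal{W}, \mathcal{V})$, so
\begin{equation*}
\tr\bigl(S_t(A_0)\, B\bigr) = \tr\bigl(A_0\, \Phi_t(P_\mathcal{W} B P_\mathcal{V})\bigr).
\end{equation*}
Using the cyclicity of the trace to rewrite $\tr(S_t(A_0)\, P_X B P_Y) = \tr(B\, P_Y S_t(A_0) P_X)$ and letting $B$ range over all of $\mathcal{B}(\mathcal{H})$, the fact that this quantity vanishes whenever $(X, Y) \neq (\mathcal{W}, \mathcal{V})$ forces $P_Y S_t(A_0) P_X = 0$ in those three cases. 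Combined with the identity $S_t(A_0) = P_\mathcal{R} S_t(A_0) P_\mathcal{R}$ coming from the definition of $S_t$ and the decomposition $P_\mathcal{R} = P_\mathcal{V} + P_\mathcal{W}$, this yields $S_t(A_0) = P_\mathcal{V} S_t(A_0) P_\mathcal{W}$, which is precisely the required block-preservation. The three remaining blocks of $A$ are treated by the same argument, swapping the roles of $\mathcal{V}$ and $\mathcal{W}$.

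The main (and essentially only) obstacle is careful bookkeeping of which of the four pieces of $\Phi_t(P_\mathcal{R} B P_\mathcal{R})$ survives when paired with a given block of $A$, and then converting the resulting trace identities into the vanishing of the off-diagonal blocks of $S_t(A_0)$. No new structural input about the Lindbladian or the semigroup is needed beyond Proposition~\ref{Prop_Phi_Block_by_Block}: the cut off semigroup inherits its block-diagonal behaviour automatically from $(\Phi_t)$ by duality, which is the conceptual content of the proof.
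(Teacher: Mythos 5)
Your argument is correct. The paper itself states this proposition without proof (it is quoted from Baumgartner--Narnhofer), so there is nothing to compare against line by line; your derivation by Hilbert--Schmidt duality is the natural one and all the steps check out: the identity $\tr\bigl(S_t(A)B\bigr)=\tr\bigl(A\,\Phi_t(P_\mathcal{R}BP_\mathcal{R})\bigr)$, the linear extension of Proposition~\ref{Prop_Phi_Block_by_Block} from density matrices to arbitrary operators supported in $\mathcal{R}$ (via decomposition into Hermitian and then positive parts, which preserves the support condition), the orthogonality bookkeeping that kills three of the four terms, and the passage from the vanishing of $\tr\bigl(B\,P_YS_t(A_0)P_X\bigr)$ for all $B$ to the vanishing of the corresponding blocks. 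The only cosmetic remark is that the duality convention $\tr(\Phi_t^*(A)B)=\tr(A\,\Phi_t(B))$ should be stated as the definition of the adjoint being used, but this is exactly the convention the paper adopts for $\Phi_t^*$.
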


The cut off evolution semigroup has the useful following properties:

\begin{proposition} [\cite{baumgartner2012structures}] \label{Prop_Cut_Off_Inv_Proj}
    \begin{enumerate}
        \item A non-trivial Hermitian projector $P_\mathcal{V}$ is invariant under $\left( \mathcal{S}_t \right)_{t\ge0}$ iff it projects onto an enclosure $\mathcal{V} \subseteq \mathcal{R}$.
        \item $\mathcal{V} \subseteq \mathcal{R}$ is a minimal enclosure iff $P_\mathcal{V}$ cannot be decomposed into a sum of invariant Hermitian projectors.
    \end{enumerate}
\end{proposition}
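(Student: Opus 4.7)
For Part~1, my plan is to translate between the Heisenberg and Schrödinger pictures via the duality $\tr(A \Phi_t(\rho)) = \tr(\Phi_t^*(A) \rho)$. Assuming $\mathcal{V} \subseteq \mathcal{R}$ is an enclosure, I would test $S_t(P_\mathcal{V})$ against an arbitrary state $\rho$ supported in $\mathcal{R}$: by definition of $S_t$, this pairing equals $\tr(\Phi_t(\rho) P_\mathcal{V})$. I then apply Proposition~\ref{Prop_Phi_Block_by_Block} with $\mathcal{W} = \mathcal{V}^\perp \cap \mathcal{R}$ to split $\rho$ into four blocks; the off-diagonal blocks remain off-diagonal under $\Phi_t$ and contribute zero to the trace against $P_\mathcal{V}$, the $(\mathcal{W},\mathcal{W})$ block has image supported in $\mathcal{W}$ and also contributes zero, while the $(\mathcal{V},\mathcal{V})$ block yields $\tr(\Phi_t(\rho_{\mathcal{V},\mathcal{V}})) = \tr(\rho_{\mathcal{V},\mathcal{V}}) = \tr(P_\mathcal{V}\rho)$ by trace preservation. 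Conversely, if $S_t(P_\mathcal{V}) = P_\mathcal{V}$, then the definition of $S_t$ forces $\mathcal{V} \subseteq \mathcal{R}$, and for any state $\rho$ with support in $\mathcal{V}$ the same duality gives $\tr(P_\mathcal{V}\Phi_t(\rho)) = \tr(S_t(P_\mathcal{V})\,\rho) = 1$, so $\Phi_t(\rho)$ has support in $\mathcal{V}$.

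For Part~2, I combine Part~1 with the simple remark that the intersection of two enclosures is an enclosure (immediate from the definition). To prove the direct implication by contraposition, I assume $\mathcal{V}$ is not minimal and pick a non-trivial enclosure $\mathcal{V}_1 \subsetneq \mathcal{V}$. Then $\mathcal{V}_1^\perp \cap \mathcal{R}$ is an enclosure by Proposition~\ref{Prop_Phi_Block_by_Block}, and its intersection with the enclosure $\mathcal{V}$ gives a non-trivial complementary enclosure $\mathcal{V}_2$ inside $\mathcal{V}$, so that $P_\mathcal{V} = P_{\mathcal{V}_1} + P_{\mathcal{V}_2}$ is the desired decomposition into invariant projectors by Part~1.

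For the reverse implication, I suppose $P_\mathcal{V} = P_1 + P_2$ with $P_1, P_2$ non-trivial invariant Hermitian projectors. Squaring this identity gives $P_1 P_2 + P_2 P_1 = 0$, and the key algebraic step is to deduce $P_1 P_2 = 0$: left-multiplying $P_2 P_1 = -P_1 P_2$ by $P_1$ yields $P_1 P_2 P_1 = 0$, hence $(P_1 P_2)(P_1 P_2)^* = P_1 P_2 P_2 P_1 = P_1 P_2 P_1 = 0$, which forces $P_1 P_2 = 0$. Part~1 then identifies the ranges of $P_1$ and $P_2$ as non-trivial orthogonal enclosures strictly inside $\mathcal{V}$, so $\mathcal{V}$ is not minimal.

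The argument is mostly bookkeeping; the two steps that will deserve the most care are the correct handling of the $P_\mathcal{R}$ cutoff in the definition of $S_t$ (everything must be tested against states supported in $\mathcal{R}$, not arbitrary states), and the short algebraic lemma showing that two Hermitian projectors whose sum is itself a projector must be mutually orthogonal.
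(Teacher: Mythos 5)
The paper itself does not prove this proposition --- it is quoted from Baumgartner--Narnhofer --- so there is no internal proof to compare against; your argument is the standard one (duality $\tr(A\,\Phi_t(\rho))=\tr(\Phi_t^*(A)\,\rho)$ plus the block structure of Proposition~\ref{Prop_Phi_Block_by_Block}) and it is correct. One small repair in the orthogonality lemma: left-multiplying $P_2P_1=-P_1P_2$ by $P_1$ gives $P_1P_2P_1=-P_1P_2$, not $0$; you must also multiply on the right by $P_1$ (or observe that $-P_1P_2P_1$ is Hermitian, forcing $P_1P_2=P_2P_1$ and hence $P_1P_2=0$) before concluding $(P_1P_2)(P_1P_2)^\dag=P_1P_2P_1=0$. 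With that one-line fix the proof is complete; the two points you flag as delicate (testing only against states supported in $\mathcal{R}$, and the mutual orthogonality of projectors summing to a projector) are indeed the only places where care is required.
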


Therefore, it can be used to completely characterize the Hermitian projectors onto enclosures.

\subsection{Uniqueness of the orthogonal decomposition}

We adapt Carbone and Pautrat's method to study uniqueness in continuous time. To aid understanding, we provide more detailed proofs when necessary due to their importance. 

\begin{proposition}  [\cite{carbone2016irreducible}] \label{Prop_Non_Unique_Decomp}
    Consider two different minimal enclosures $\mathcal{V}_1, \mathcal{V}_2 \subseteq \mathcal{R}$ and let $\mathcal{V} := \mathcal{V}_1 \oplus \mathcal{V}_2$. The decomposition of $\mathcal{V}$ into a direct sum of minimal enclosures is unique iff any minimal enclosure $\mathcal{W}$ not simultaneously orthogonal to both $\mathcal{V}_1$ and $\mathcal{V}_2$ verifies $\mathcal{V} \cap \mathcal{W} = \left\{ 0 \right\}$.
\end{proposition}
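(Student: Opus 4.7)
I would prove both implications by contraposition. Two preliminary observations will be used throughout: (a) $\mathcal{V}=\mathcal{V}_1\oplus\mathcal{V}_2$ is itself an enclosure, since by Proposition~\ref{Prop_Inv_States_Support} each $\mathcal{V}_i$ is the support of an invariant state $\rho_i$, so $(\rho_1+\rho_2)/2$ is an invariant state whose support equals $\mathcal{V}$, which is therefore an enclosure by the unlabeled proposition preceding Proposition~\ref{Prop_Phi_Block_by_Block}; (b) the intersection of two enclosures is again an enclosure, because invariance passes to intersections. Consequently, whenever $\mathcal{W}$ is a minimal enclosure with $\mathcal{V}\cap\mathcal{W}\neq\{0\}$, minimality of $\mathcal{W}$ forces $\mathcal{V}\cap\mathcal{W}=\mathcal{W}$, i.e.\ $\mathcal{W}\subseteq\mathcal{V}$.

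\paragraph*{Easy direction $(\Leftarrow)$.}
Assume the decomposition is not unique: there is a second decomposition $\mathcal{V}=\mathcal{W}_1\oplus\mathcal{W}_2$ into minimal enclosures with, without loss of generality, $\mathcal{W}_1\notin\{\mathcal{V}_1,\mathcal{V}_2\}$. Then $\mathcal{V}\cap\mathcal{W}_1=\mathcal{W}_1\neq\{0\}$, and $\mathcal{W}_1$ cannot be orthogonal to both $\mathcal{V}_1$ and $\mathcal{V}_2$: otherwise $\mathcal{W}_1\perp\mathcal{V}$, which combined with $\mathcal{W}_1\subseteq\mathcal{V}$ forces $\mathcal{W}_1=\{0\}$, a contradiction. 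Thus $\mathcal{W}_1$ is a minimal enclosure witnessing the failure of the stated condition.

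\paragraph*{Converse $(\Rightarrow)$.}
Conversely, assume a minimal enclosure $\mathcal{W}\notin\{\mathcal{V}_1,\mathcal{V}_2\}$, not simultaneously orthogonal to both $\mathcal{V}_i$, satisfies $\mathcal{V}\cap\mathcal{W}\neq\{0\}$. I build a second decomposition. By (b), $\mathcal{W}\subseteq\mathcal{V}$, and since $\mathcal{V}$ strictly contains the minimal $\mathcal{V}_1$ it is not minimal, so $\mathcal{W}\subsetneq\mathcal{V}$. Set $\mathcal{W}':=\mathcal{V}\cap\mathcal{W}^\perp$: by Proposition~\ref{Prop_Phi_Block_by_Block} the subspace $\mathcal{W}^\perp\cap\mathcal{R}$ is an enclosure, and since $\mathcal{V}\subseteq\mathcal{R}$, $\mathcal{W}'=\mathcal{V}\cap(\mathcal{W}^\perp\cap\mathcal{R})$ is a nonzero enclosure with $\mathcal{V}=\mathcal{W}\oplus^\perp\mathcal{W}'$. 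Iterating the extraction procedure from the proof of Theorem~\ref{Thm_First_Decomp_R} inside $\mathcal{W}'$ (pick a minimal enclosure $\tilde{\mathcal{V}}_1\subseteq\mathcal{W}'$, pass to $\mathcal{W}'\cap\tilde{\mathcal{V}}_1^\perp$, and repeat) yields an orthogonal sum $\mathcal{W}'=\tilde{\mathcal{V}}_1\oplus^\perp\cdots\oplus^\perp\tilde{\mathcal{V}}_k$ of minimal enclosures. Therefore
\[\mathcal{V}=\mathcal{W}\oplus^\perp\tilde{\mathcal{V}}_1\oplus^\perp\cdots\oplus^\perp\tilde{\mathcal{V}}_k\]
is a direct sum decomposition of $\mathcal{V}$ into minimal enclosures, distinct from $\mathcal{V}_1\oplus\mathcal{V}_2$ because $\mathcal{W}\notin\{\mathcal{V}_1,\mathcal{V}_2\}$ appears as a summand.

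\paragraph*{Main obstacle.}
The delicate point in the converse is guaranteeing a witness $\mathcal{W}$ genuinely distinct from both $\mathcal{V}_1$ and $\mathcal{V}_2$: when $\mathcal{V}_1\not\perp\mathcal{V}_2$, the enclosure $\mathcal{V}_1$ itself satisfies the literal hypothesis without immediately yielding a new decomposition. One bootstraps out of this by first decomposing the auxiliary enclosure $\mathcal{V}\cap\mathcal{V}_1^\perp$ into minimal enclosures; every such summand is orthogonal to $\mathcal{V}_1$ and hence distinct from $\mathcal{V}_2$ (which is not orthogonal to $\mathcal{V}_1$ in this case), thereby supplying the genuine ``third'' minimal enclosure on which to run the construction above.
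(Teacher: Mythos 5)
The paper states Proposition~\ref{Prop_Non_Unique_Decomp} without proof, deferring entirely to Carbone and Pautrat, so there is no in-paper argument to compare against. Your proof is essentially correct and self-contained, and it rests on the right ingredients: $\mathcal{V}$ is an enclosure (as the support of the invariant state $(\rho_1+\rho_2)/2$), the intersection of two enclosures is an enclosure (so a minimal $\mathcal{W}$ meeting $\mathcal{V}$ nontrivially must lie inside $\mathcal{V}$), and for enclosures $\mathcal{W}\subseteq\mathcal{V}\subseteq\mathcal{R}$ the space $\mathcal{V}\cap\mathcal{W}^\perp=\mathcal{V}\cap(\mathcal{W}^\perp\cap\mathcal{R})$ is again an enclosure, which can be exhausted by minimal ones exactly as in the proof of Theorem~\ref{Thm_First_Decomp_R}. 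This is the natural elementary route.

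One point should be tightened. As you observe in your closing paragraph, ``not simultaneously orthogonal to both'' must be read as ``orthogonal to neither $\mathcal{V}_1$ nor $\mathcal{V}_2$'': under the weaker reading ``non-orthogonal to at least one of them'', the choice $\mathcal{W}=\mathcal{V}_1$ always violates the right-hand condition and the stated equivalence is false. Your converse direction (together with the bootstrap for the case $\mathcal{W}\in\{\mathcal{V}_1,\mathcal{V}_2\}$) implicitly works with the strong reading, but your easy direction only verifies the weak property for the witness $\mathcal{W}_1$ (``cannot be orthogonal to both''), so the two halves are not using the same hypothesis. The fix is immediate: if $\mathcal{V}_1\perp\mathcal{V}_2$ and $\mathcal{W}_1\perp\mathcal{V}_1$, then $\mathcal{W}_1\subseteq\mathcal{V}\cap\mathcal{V}_1^\perp=\mathcal{V}_2$, whence $\mathcal{W}_1=\mathcal{V}_2$ by minimality, a contradiction; so a summand $\mathcal{W}_1\notin\{\mathcal{V}_1,\mathcal{V}_2\}$ is automatically non-orthogonal to each $\mathcal{V}_i$. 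If instead $\mathcal{V}_1\not\perp\mathcal{V}_2$, then $\mathcal{V}_1$ itself is a legitimate witness under the strong reading, consistent with the non-uniqueness your bootstrap already establishes in that case. Two cosmetic remarks: a second decomposition of $\mathcal{V}$ need not have exactly two summands (that would follow from Lemma~\ref{Lemma_Equality_Dim}, which logically depends on this proposition), but your argument only needs one summand outside $\{\mathcal{V}_1,\mathcal{V}_2\}$, which any genuinely different decomposition must contain; and the fact that every nonzero enclosure contains a minimal one, used in the iteration, deserves a word (take an enclosure of minimal dimension inside it, which exists by finite-dimensionality).
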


\begin{corollary} [\cite{carbone2016irreducible}]
    If $\rho \in \mathcal{D}(\mathcal{H})$ is invariant and if $\mathcal{V}_1, \mathcal{V}_2 \subseteq \mathcal{R}$ are two different minimal enclosures such that the decomposition of $\mathcal{V}_1 \oplus \mathcal{V}_2$ into a sum of minimal enclosures is unique, then:
    \begin{equation*}
        P_{\mathcal{V}_1} \rho P_{\mathcal{V}_2} = P_{\mathcal{V}_2} \rho P_{\mathcal{V}_1} = 0.
    \end{equation*}
\end{corollary}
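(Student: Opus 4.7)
My plan is to argue by contradiction: assume that $\rho_{12} := P_{\mathcal{V}_1} \rho P_{\mathcal{V}_2}$ is non-zero and exhibit a minimal enclosure contained in $\mathcal{V} := \mathcal{V}_1 \oplus \mathcal{V}_2$ distinct from both $\mathcal{V}_1$ and $\mathcal{V}_2$, contradicting the characterization of uniqueness in Proposition \ref{Prop_Non_Unique_Decomp}. The contradiction will be obtained by forcing an invariant state supported in $\mathcal{V}$ to be a convex combination of $\rho_1$ and $\rho_2$, while by construction it still has a non-zero off-diagonal block.

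The first step is to show that $\tau := P_{\mathcal{V}} \rho P_{\mathcal{V}}$ is an invariant positive operator whose support is exactly $\mathcal{V}$. Positivity is immediate from $\rho \geq 0$. For invariance, since $\rho$ is an invariant state it has support in $\mathcal{R}$, so Proposition \ref{Prop_Phi_Block_by_Block} applied to the enclosure $\mathcal{V}$, and then to each $\mathcal{V}_i$, shows that every block $P_{\mathcal{V}_i} \rho P_{\mathcal{V}_j}$ is separately preserved under $\Phi_t$; in particular $\tau$ is invariant and each diagonal block $P_{\mathcal{V}_i} \rho P_{\mathcal{V}_i}$ is an invariant positive operator with support in the minimal enclosure $\mathcal{V}_i$. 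Proposition \ref{Prop_Inv_States_Support} then forces $P_{\mathcal{V}_i} \rho P_{\mathcal{V}_i} = c_i \rho_i$ for some $c_i \geq 0$, where $\rho_i$ is the unique extremal invariant state supported on $\mathcal{V}_i$. Finally, positivity of $\rho$ together with $\rho_{12} \neq 0$ forces $c_1, c_2 > 0$, hence $\supp(\tau) = \mathcal{V}$.

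Second, I would decompose the normalized state $\tau/\tr(\tau)$ as a convex combination of extremal invariant states via Krein--Milman on the compact convex set of invariant states (valid in finite dimension). Each extremal piece has support equal to a minimal enclosure by Proposition \ref{Prop_Inv_States_Support}, and since $\supp(\tau) = \mathcal{V}$, each such minimal enclosure is contained in $\mathcal{V}$. I then claim that under the uniqueness hypothesis the only minimal enclosures contained in $\mathcal{V}$ are $\mathcal{V}_1$ and $\mathcal{V}_2$: any other candidate $\mathcal{W} \subseteq \mathcal{V}$ would fail to be simultaneously orthogonal to both $\mathcal{V}_1$ and $\mathcal{V}_2$ (since it lies in their span), yet would satisfy $\mathcal{V} \cap \mathcal{W} = \mathcal{W} \neq \{0\}$, which Proposition \ref{Prop_Non_Unique_Decomp} rules out. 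Consequently $\tau = \lambda \rho_1 + \mu \rho_2$ for some $\lambda, \mu \geq 0$.

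The contradiction is then immediate: $\lambda \rho_1 + \mu \rho_2$ has vanishing $(1,2)$ block with respect to $\mathcal{V}_1 \oplus \mathcal{V}_2$ since each $\rho_i$ is supported in $\mathcal{V}_i$, whereas by construction $P_{\mathcal{V}_1} \tau P_{\mathcal{V}_2} = \rho_{12} \neq 0$. The delicate step I foresee is the second one, specifically correctly leveraging Proposition \ref{Prop_Non_Unique_Decomp} to eliminate all minimal enclosures inside $\mathcal{V}$ apart from the two given ones; the first step is essentially bookkeeping with the block structure, and the closing argument reduces to a direct comparison of blocks.
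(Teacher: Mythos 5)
The paper states this corollary without proof, so there is no in-text argument to compare against; judged on its own, your proof is correct and stays entirely within the toolkit the paper provides. The route is a natural one: $\tau=P_{\mathcal V}\rho P_{\mathcal V}$ is a positive invariant operator (by Proposition \ref{Prop_Phi_Block_by_Block}), its normalization is a convex combination of extremal invariant states by Krein--Milman in finite dimension, each such extremal state is supported on a minimal enclosure contained in $\supp(\tau)\subseteq\mathcal V$ (Proposition \ref{Prop_Inv_States_Support}), and the uniqueness hypothesis together with Proposition \ref{Prop_Non_Unique_Decomp} leaves only $\mathcal V_1$ and $\mathcal V_2$ as candidates, forcing $\tau$ to be block-diagonal. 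Three small remarks. First, your claim that $\supp(\tau)=\mathcal V$ does not follow from $c_1,c_2>0$: a positive matrix can have full-rank diagonal blocks and still be rank-deficient, e.g. $\bigl(\begin{smallmatrix}1&1\\1&1\end{smallmatrix}\bigr)$. This is harmless because you only ever use the inclusion $\supp(\tau)\subseteq\mathcal V$, which is automatic; in fact the entire contradiction scaffolding (and the positivity argument giving $c_1,c_2>0$) can be deleted and the proof read as a direct one. Second, when eliminating a third minimal enclosure $\mathcal W\subseteq\mathcal V$: if Proposition \ref{Prop_Non_Unique_Decomp} is read as requiring $\mathcal W$ to be non-orthogonal to \emph{each} of $\mathcal V_1$ and $\mathcal V_2$ (the reading that makes that proposition consistent, since $\mathcal W=\mathcal V_1$ must clearly be excluded), then "it lies in their span" only yields non-orthogonality to at least one; the missing half-line is that a $\mathcal W\subseteq\mathcal V$ orthogonal to, say, $\mathcal V_1$ satisfies $\mathcal W\subseteq\mathcal V_2$ and hence $\mathcal W=\mathcal V_2$ by minimality. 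Third, you tacitly assume $\mathcal V_1\perp\mathcal V_2$ throughout the block computations; this is legitimate but should be said, since it is itself a consequence of the uniqueness hypothesis (an orthogonal decomposition of $\mathcal V$ into minimal enclosures always exists, so the unique one must be that orthogonal one), as the paper notes immediately after the corollary.
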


Recall that an orthogonal decomposition always exists. Thus, if the decomposition is unique, the minimal enclosures involved must be mutually orthogonal. Moreover, uniqueness can also be characterized by another sufficient condition, given by the contraposition of the following lemma.
\begin{lemma} [\cite{carbone2016irreducible}] \label{Lemma_Equality_Dim}
    If $\mathcal{V}_1, \mathcal{V}_2 \subseteq \mathcal{R}$ are two different minimal enclosures such that the decomposition of $\mathcal{V} = \mathcal{V}_1 \oplus \mathcal{V}_2$ into a sum of minimal enclosures is not unique, then they have the same dimension.
\end{lemma}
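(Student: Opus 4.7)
The plan is to combine Proposition \ref{Prop_Non_Unique_Decomp} with the uniqueness of the invariant state on a minimal enclosure and a rank-of-block count. Non-uniqueness of the decomposition of $\mathcal{V}$ yields, via Proposition \ref{Prop_Non_Unique_Decomp}, a minimal enclosure $\mathcal{W}\subseteq\mathcal{V}$ distinct from both $\mathcal{V}_1$ and $\mathcal{V}_2$ (since the intersection of two enclosures is itself an enclosure and $\mathcal{W}$ is minimal, $\mathcal{V}\cap\mathcal{W}\neq\{0\}$ actually forces $\mathcal{W}\subseteq\mathcal{V}$). Since $\mathcal{W}$ is minimal, if $P_{\mathcal{V}_i}|_{\mathcal{W}}=0$ then $\mathcal{W}\subseteq\mathcal{V}_{3-i}$, forcing $\mathcal{W}=\mathcal{V}_{3-i}$ by minimality of $\mathcal{V}_{3-i}$, which is excluded; both $P_{\mathcal{V}_1}|_{\mathcal{W}}$ and $P_{\mathcal{V}_2}|_{\mathcal{W}}$ are therefore non-zero.

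The key estimate is $\dim\mathcal{W}\ge\max(\dim\mathcal{V}_1,\dim\mathcal{V}_2)$. Let $\rho_{\mathcal{W}}$ be the unique normalized invariant state with support exactly $\mathcal{W}$ (guaranteed by Proposition \ref{Prop_Inv_States_Support}). The block-by-block action of $\Phi_t$ with respect to $\mathcal{V}_i$ and $\mathcal{V}_i^\perp\cap\mathcal{R}$ (Proposition \ref{Prop_Phi_Block_by_Block}) shows that $A_{ii}:=P_{\mathcal{V}_i}\rho_{\mathcal{W}}P_{\mathcal{V}_i}$ is $\Phi_t$-invariant; it is also positive semidefinite, non-zero (by the previous paragraph), and supported in $\mathcal{V}_i$. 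After normalization it must therefore coincide with the unique invariant state on $\mathcal{V}_i$, whose support is all of $\mathcal{V}_i$ by Proposition \ref{Prop_Inv_States_Support}. Hence $\mathrm{rank}(A_{ii})=\dim\mathcal{V}_i$, and the elementary bound $\mathrm{rank}(A_{ii})\le\mathrm{rank}(\rho_{\mathcal{W}})=\dim\mathcal{W}$ for the compression of a positive matrix gives $\dim\mathcal{V}_i\le\dim\mathcal{W}$ for $i=1,2$.

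To convert this inequality into the desired equality I produce a second minimal enclosure with the same lower bound, orthogonal to $\mathcal{W}$. The subspace $\mathcal{W}':=\mathcal{W}^\perp\cap\mathcal{V}$ is the intersection of the two enclosures $\mathcal{W}^\perp\cap\mathcal{R}$ and $\mathcal{V}$, and hence is itself an enclosure; it is non-zero because $\mathcal{V}$ properly contains the minimal $\mathcal{V}_1$ and so cannot itself be minimal. In finite dimension $\mathcal{W}'$ contains some minimal enclosure $\mathcal{W}''$, and, exactly as for $\mathcal{W}$, the equality $\mathcal{W}''=\mathcal{V}_i$ would imply $\mathcal{V}_i\perp\mathcal{W}$, contradicting $P_{\mathcal{V}_i}|_{\mathcal{W}}\neq 0$; the same rank argument then yields $\dim\mathcal{W}''\ge\max(\dim\mathcal{V}_1,\dim\mathcal{V}_2)$. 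Since $\mathcal{W}\perp\mathcal{W}''$ inside $\mathcal{V}$,
$$ 2\max(\dim\mathcal{V}_1,\dim\mathcal{V}_2)\le\dim\mathcal{W}+\dim\mathcal{W}''\le\dim\mathcal{V}=\dim\mathcal{V}_1+\dim\mathcal{V}_2, $$
which forces $\dim\mathcal{V}_1=\dim\mathcal{V}_2$. The main subtlety I expect is the careful use of minimality to ensure simultaneously that $\mathcal{W}$ and $\mathcal{W}''$ project non-trivially onto each $\mathcal{V}_i$, that the compressions $A_{ii}$ coincide with the full-support invariant state $\rho_i$, and that $\mathcal{W}'\neq\{0\}$; everything else is block-structure bookkeeping and a final dimension tally.
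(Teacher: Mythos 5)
Your proof is correct. The first half coincides with the paper's: you compress $\rho_{\mathcal W}$ onto each $\mathcal V_i$, use the block-by-block action (Proposition \ref{Prop_Phi_Block_by_Block}) and the uniqueness of the full-support invariant state on a minimal enclosure (Proposition \ref{Prop_Inv_States_Support}) to get $\dim\mathcal V_i\le\dim\mathcal W$, exactly as the paper does. The two arguments then diverge. The paper symmetrizes: it compresses in the other direction, showing $P_{\mathcal W}\rho_{\mathcal V_1}P_{\mathcal W}$ is a non-zero invariant operator supported in $\mathcal W$, hence proportional to $\rho_{\mathcal W}$, which gives $\dim\mathcal W\le\dim\mathcal V_1$ and immediately $\dim\mathcal V_1=\dim\mathcal W=\dim\mathcal V_2$. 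You instead extract a second minimal enclosure $\mathcal W''\subseteq\mathcal W^\perp\cap\mathcal V$, verify it is subject to the same lower bound, and close with the count $2\max(\dim\mathcal V_1,\dim\mathcal V_2)\le\dim\mathcal W+\dim\mathcal W''\le\dim\mathcal V_1+\dim\mathcal V_2$. Both routes are valid; the paper's is shorter and also pins down $\dim\mathcal W$ itself, while yours trades one more application of Proposition \ref{Prop_Phi_Block_by_Block} for the bookkeeping needed to produce $\mathcal W''$ (that $\mathcal W^\perp\cap\mathcal V$ is a non-trivial enclosure, that a minimal enclosure inside it cannot equal $\mathcal V_1$ or $\mathcal V_2$), and as a by-product exhibits $\mathcal V=\mathcal W\oplus\mathcal W''$ as an alternative orthogonal decomposition into minimal enclosures. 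Two small points worth making explicit: your steps ``$P_{\mathcal V_i}|_{\mathcal W}=0\Rightarrow\mathcal W\subseteq\mathcal V_{3-i}$'' and the identification $\mathcal V_1^\perp\cap\mathcal V=\mathcal V_2$ use the orthogonality of $\mathcal V_1$ and $\mathcal V_2$, which the lemma's $\oplus$ should be read as asserting (as in Proposition \ref{Prop_Partial_Isometry}); and you rely on the standard facts that intersections and sums of enclosures are enclosures, which the paper also uses implicitly but does not state.
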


\begin{proof}
    According to Proposition \ref{Prop_Non_Unique_Decomp}, there exists a minimal enclosure $\mathcal{W}$ not simultaneously orthogonal to both $\mathcal{V}_1$ and $\mathcal{V}_2$ such that $\mathcal{V} \cap \mathcal{W} \ne \left\{ 0 \right\}$. Let $\rho_\mathcal{W}$, resp. $\rho_{\mathcal{V}_1}$, $\rho_{\mathcal{V}_2}$, be the invariant states associated to $\mathcal{W}$, resp. $\mathcal{V}_1$, $\mathcal{V}_2$ and $P_\mathcal{W}$, resp. $P_{\mathcal{V}_1}$, $P_{\mathcal{V}_2}$ the associated Hermitian projectors. Recall that, by Proposition \ref{Prop_Phi_Block_by_Block}, $P_{\mathcal{V}_1} \mathcal{D(H)} P_{\mathcal{V}_1}$ evolves independently from the rest. Thus, $P_{\mathcal{V}_1} \rho_\mathcal{W} P_{\mathcal{V}_1}$ is an invariant state, up to renormalization, with support in the minimal enclosure $\mathcal{V}_1$. It is then proportional to $\rho_{\mathcal{V}_1}$ by Proposition \ref{Prop_Inv_States_Support}. So, the support of $\rho_{\mathcal{V}_1}$ is included in the support of $\rho_\mathcal{W}$. In particular, $\dim (\mathcal{V}_1) \le \dim (\mathcal{W})$. Doing the same with $P_\mathcal{W} \rho_{\mathcal{V}_1} P_\mathcal{W}$, we have $\dim (\mathcal{V}_1) = \dim (\mathcal{W})$. Applying the same process to $\mathcal{V}_2$ completes the proof.
\end{proof}

\vspace{2mm}

The following proposition introduces a partial isometry that is central to characterize the general form of the decomposition of the Hilbert space.

\begin{proposition} [\cite{carbone2016irreducible}] \label{Prop_Partial_Isometry}
    Let $\mathcal{V}_1, \mathcal{V}_2 \subseteq \mathcal{R}$ be two orthogonal minimal enclosures such that the decomposition of $\mathcal{V}_1 \oplus \mathcal{V}_2$ into a sum of minimal enclosures is not unique. Then there exists a partial isometry $Q \in \mathcal{B}(\mathcal{H})$ from $\mathcal{V}_1$ to $\mathcal{V}_2$ satisfying
    \begin{equation*}
        Q^\dag Q = \mathbbm{1}_{\mathcal{V}_1} \quad and \quad Q Q^\dag = \mathbbm{1}_{\mathcal{V}_2}
    \end{equation*}
    such that, for any $\rho \in \mathcal{D}(\mathcal{H})$ with support in $\mathcal{R}$, we have, for any $t\ge0$ and $j \in \{ 1,2 \}$,
    \begin{equation*} 
        R \Phi_t (\rho) P_{\mathcal{V}_j} + P_{\mathcal{V}_j} \Phi_t (\rho) R = \Phi_t \left( R \rho P_{\mathcal{V}_j} + P_{\mathcal{V}_j} \rho R \right),
    \end{equation*}
    with $R := Q P_{\mathcal{V}_1} + Q^\dag P_{\mathcal{V}_2}$.
\end{proposition}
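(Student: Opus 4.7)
The plan is to extract $Q$ from the off-diagonal block of an invariant state supported on a third minimal enclosure (whose existence the non-uniqueness hypothesis supplies), and then to propagate the invariance of that single cross operator to the full intertwining via the block-by-block evolution and irreducibility on each minimal enclosure.

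First, I would apply Proposition \ref{Prop_Non_Unique_Decomp}: the non-uniqueness of the decomposition of $\mathcal{V}$ yields a minimal enclosure $\mathcal{W}$ which is not simultaneously orthogonal to both $\mathcal{V}_1$ and $\mathcal{V}_2$ and satisfies $\mathcal{V}\cap\mathcal{W}\neq\{0\}$. Since the intersection of two enclosures is still an enclosure, minimality of $\mathcal{W}$ forces $\mathcal{W}\subseteq\mathcal{V}$; applying the same argument to $\mathcal{V}_j\cap\mathcal{W}$ gives $\mathcal{V}_j\cap\mathcal{W}=\{0\}$, and consequently $\mathcal{W}$ is non-orthogonal to both $\mathcal{V}_j$ (otherwise $\mathcal{W}$ would coincide with one of them). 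Lemma \ref{Lemma_Equality_Dim} then provides $\dim\mathcal{V}_1=\dim\mathcal{V}_2=\dim\mathcal{W}$.

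Second, with $\rho_\mathcal{W}$ the unique invariant state whose support is $\mathcal{W}$ (Proposition \ref{Prop_Inv_States_Support}), I would write
\begin{equation*}
\rho_\mathcal{W}=c_1\rho_{\mathcal{V}_1}+c_2\rho_{\mathcal{V}_2}+X+X^\dag
\end{equation*}
in the $\mathcal{V}_1\oplus\mathcal{V}_2$ block form, the identification of the diagonal blocks with multiples of the extremal invariant states being a direct consequence of Proposition \ref{Prop_Phi_Block_by_Block} and minimality; the constants $c_j$ are strictly positive because $\mathcal{W}$ is not orthogonal to $\mathcal{V}_j$. Positivity of $\rho_\mathcal{W}$, combined with the fact that its rank equals $\dim\mathcal{V}_1$, i.e.\ half the size of the ambient block matrix, forces the Schur complement $c_2\rho_{\mathcal{V}_2}-X^\dag(c_1\rho_{\mathcal{V}_1})^{-1}X$ to vanish. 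Writing $X=c_1^{1/2}\rho_{\mathcal{V}_1}^{1/2}Y$ turns this into $Y^\dag Y=c_2\rho_{\mathcal{V}_2}$, and the polar decomposition of $Y$, together with the equality of dimensions, yields a unitary $Q^\dag:\mathcal{V}_2\to\mathcal{V}_1$ such that $X=(c_1c_2)^{1/2}\rho_{\mathcal{V}_1}^{1/2}Q^\dag\rho_{\mathcal{V}_2}^{1/2}$. Extended by zero outside $\mathcal{V}_1$, this $Q$ is the candidate partial isometry, satisfying $Q^\dag Q=\mathbbm{1}_{\mathcal{V}_1}$ and $QQ^\dag=\mathbbm{1}_{\mathcal{V}_2}$.

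Third, setting $R=Q+Q^\dag$ (which coincides with $QP_{\mathcal{V}_1}+Q^\dag P_{\mathcal{V}_2}$ by construction), I would expand both sides of the claimed identity in the $\mathcal{V}_1\oplus\mathcal{V}_2$ block form. Contributions from components of $\rho$ lying outside $\mathcal{V}_1\oplus\mathcal{V}_2$ vanish automatically, and Proposition \ref{Prop_Phi_Block_by_Block} reduces the identity (for $j=1$, the case $j=2$ being symmetric) to the intertwining relations
\begin{equation*}
\Phi_t(Q\alpha)=Q\Phi_t(\alpha),\qquad\Phi_t(Q^\dag\beta)=Q^\dag\Phi_t(\beta),
\end{equation*}
for $\alpha$ in the $(\mathcal{V}_1,\mathcal{V}_1)$-block and $\beta$ in the $(\mathcal{V}_2,\mathcal{V}_1)$-block, together with their right-multiplication adjoints. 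The hard part will be precisely this upgrade: moving from the invariance of the single cross operator $X$ to a relation valid for every block element. My plan is to combine the invariance of $X$ with Proposition \ref{Prop_Cut_Off_Inv_Proj} and the irreducibility of the cut-off semigroup $(S_t)_{t\geq 0}$ on each minimal enclosure, which makes the space of $S_t$-fixed elements inside each $(\mathcal{V}_j,\mathcal{V}_j)$-block one-dimensional. A duality argument---viewing $\rho_{\mathcal{V}_1}^{-1/2}X\rho_{\mathcal{V}_2}^{-1/2}\propto Q^\dag$ as an element that is simultaneously $\Phi_t$-related and $S_t$-fixed, then conjugating by $\rho_{\mathcal{V}_j}^{1/2}$ to pass between the Schr\"odinger and Heisenberg pictures---should deliver the full intertwining. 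Making this transfer rigorous in continuous time is where the adaptation of Carbone and Pautrat's discrete-time argument requires the most care.
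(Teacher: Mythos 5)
Your first two steps are a legitimate variant of the paper's construction: the paper extracts $Q$ from the \emph{projector} $P_\mathcal{W}$ (whose diagonal blocks are proportional to $P_{\mathcal{V}_1}$, $P_{\mathcal{V}_2}$ by Proposition \ref{Prop_Cut_Off_Inv_Proj}, and whose idempotency forces the off-diagonal blocks to be $\sin\theta_0\cos\theta_0$ times a unitary), whereas you extract it from the \emph{state} $\rho_\mathcal{W}$ via a rank/Schur-complement argument. That part is workable, although it only produces $Q$ as an algebraic object; it does not yet give you any dynamical property of $Q$. The paper, by contrast, immediately obtains $S_t(Q)=Q$ and $S_t(Q^\dag)=Q^\dag$ from $S_t(P_\mathcal{W})=P_\mathcal{W}$ together with the block-by-block action of the cut-off semigroup. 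From your construction the only dynamical input is $\Phi_t(X)=X$ for the single operator $X\propto\rho_{\mathcal{V}_1}^{1/2}Q^\dag\rho_{\mathcal{V}_2}^{1/2}$, and you have not shown that this implies $S_t(Q^\dag)=Q^\dag$.

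The genuine gap is your third step, which you yourself label as the hard part and leave as a sketch. Knowing that one off-diagonal operator is fixed, together with irreducibility of the restrictions to $\mathcal{V}_1$ and $\mathcal{V}_2$ (which controls the \emph{diagonal} fixed points, not the cross block), does not by itself deliver the bimodule identity $\Phi_t\left(R\rho P_{\mathcal{V}_j}+P_{\mathcal{V}_j}\rho R\right)=R\Phi_t(\rho)P_{\mathcal{V}_j}+P_{\mathcal{V}_j}\Phi_t(\rho)R$ for \emph{every} $\rho$: that is a statement about $Q$ lying in the multiplicative domain of the unital cut-off semigroup, and proving it requires either the Kadison--Schwarz/multiplicative-domain theorem applied to a fixed point $Q$ with $S_t(Q^\dag Q)=Q^\dag Q$ (essentially Theorem 6.13/7.2 of Wolf, which the paper invokes only in later sections), or the device the paper actually uses here: the form of $P_\mathcal{W}$, together with $S_t(Q)=Q$, generates a whole one-parameter family $P_\theta$ of invariant Hermitian projectors, each of which projects onto an enclosure and therefore satisfies $\Phi_t(P_\theta\rho P_\theta)=P_\theta\Phi_t(\rho)P_\theta$; differentiating this identity in $\theta$ and evaluating at $\theta=0$ and $\theta=\pi/2$ yields exactly the claimed equation, since $\frac{dP_\theta}{d\theta}\big|_{\theta=0}=R$. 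This differentiation trick is the essential idea of the proof, and your proposal contains neither it nor a completed substitute; the ``duality argument'' conjugating by $\rho_{\mathcal{V}_j}^{1/2}$ is not specified to the point where one could check it, and as stated it conflates a Schr\"odinger-picture fixed point ($X$) with the Heisenberg-picture fixed point ($Q$) that the argument actually needs.
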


\begin{proof}
    Using the same notations as in the previous proof, by Proposition \ref{Prop_Cut_Off_Inv_Proj}, the only invariant Hermitian projector with respect to the cut off semigroup with support in $\mathcal{V}_1$ (resp. $\mathcal{V}_2$, $\mathcal{W}$) is $P_{\mathcal{V}_1}$ (resp. $P_{\mathcal{V}_2}$, $P_\mathcal{W}$). Thus, exactly as in the previous proof, $P_{\mathcal{V}_1} P_\mathcal{W} P_{\mathcal{V}_1}$ (resp. $P_{\mathcal{V}_2} P_\mathcal{W} P_{\mathcal{V}_2}$) is proportional to $P_{\mathcal{V}_1}$ (resp. $P_{\mathcal{V}_2}$). Moreover:
    \begin{itemize}
        \item $\dim \left( \mathcal{W} \right) = \dim \left( \mathcal{V}_1 \right) = \dim \left( \mathcal{V}_2 \right)$ by Lemma \ref{Lemma_Equality_Dim},
        \item $P_\mathcal{W} = \left( P_\mathcal{W}\right)^\dag$ and $\left( P_\mathcal{W} \right)^2 = P_\mathcal{W}$.
    \end{itemize}
    Therefore, when restricted to $\mathcal{V}$, in a basis respecting the orthogonal decomposition $\mathcal{V} = \mathcal{V}_1 \oplus \mathcal{V}_2$, $P_\mathcal{W}$ is of the form:
    $$P_\mathcal{W} = \begin{pmatrix} \cos^2(\theta_0) \mathbbm{1}_{\mathcal{V}_1} & \sin(\theta_0) \cos(\theta_0) Q^\dag \\ \sin(\theta_0) \cos(\theta_0) Q & \sin^2(\theta_0) \mathbbm{1}_{\mathcal{V}_2} \end{pmatrix},$$
    with $\theta_0 \in [0,\pi[$ a constant. The pattern of the off-diagonal blocks results from the last constraint, which implies in particular $Q Q^\dag = \mathbbm{1}_{\mathcal{V}_2}$ and $Q^\dag Q = \mathbbm{1}_{\mathcal{V}_1}$. We extend $Q$ to $\mathcal{B}(\mathcal{H})$ such that it acts as the zero operator on the complements of $\mathcal{V}_1$.

    This pattern of projector is in fact very general and gives birth to an infinite number of invariant Hermitian projectors. Indeed, since for any $t\ge0$, $S_t \left( P_\mathcal{W} \right) = P_\mathcal{W}$, and since
    \begin{equation*} \begin{split}
        \mathcal{S}_t \left( P_\mathcal{W} \right) &= \cos^2(\theta_0) P_{\mathcal{V}_1} + \sin^2(\theta_0) P_{\mathcal{V}_2}\\& \phantom{mmm} + \sin(\theta_0) \cos(\theta_0) \begin{pmatrix} 0 & S_t \left( Q^\dag \right) \\ S_t(Q) & 0
        \end{pmatrix},
    \end{split} \end{equation*}
    we can directly deduce that
    $S_t (Q) = Q$ and $S_t \left( Q^\dag \right) = Q^\dag$.
    
    This shows that for, any $\theta \in [0,\pi[$, a Hermitian projector $P_\theta$ of the form
    \begin{equation*}
        P_\theta = \begin{pmatrix}
            \cos^2(\theta) \mathbbm{1}_{\mathcal{V}_1} & \sin(\theta) \cos(\theta) Q^\dag \\ \sin(\theta) \cos(\theta) Q & \sin^2(\theta) \mathbbm{1}_{\mathcal{V}_2}
        \end{pmatrix}
    \end{equation*}
    is invariant, i.e., satisfies, for any $\rho \in \mathcal{D}(\mathcal{H})$ with support in $\mathcal{R}$, for any $t\ge0$,
    \begin{equation*} 
        \Phi_t \left( P_\theta \rho P_\theta \right) =  P_\theta \Phi_t (\rho) P_\theta
    \end{equation*}
    because it projects onto an enclosure by Proposition \ref{Prop_Cut_Off_Inv_Proj}. By differentiating this equation, we get:
    \begin{equation*}
        \Phi_t \left( \frac{d P_\theta}{d\theta} \rho P_\theta + P_\theta \rho \frac{d P_\theta}{d\theta} \right) = \frac{d P_\theta}{d\theta} \Phi_t (\rho) P_\theta + P_\theta \rho \frac{d P_\theta}{d\theta}.
    \end{equation*}
    To show the equation in the theorem, it is sufficient to evaluate the above equation in $\theta = 0$ and $\theta = \pi / 2$.
\end{proof}

\vspace{2mm}


The partial isometry $Q$ can be used to describe exactly the relation between the extremal invariant states $\rho_{\mathcal{V}_1}$ and $\rho_{\mathcal{V}_2}$.

\begin{corollary} [\cite{carbone2016irreducible}] \label{Cor_Non_Unique_Minimal_Enclosures}
    With the same assumptions as in Proposition \ref{Prop_Partial_Isometry}, let $\rho_{\mathcal{V}_1}$, resp. $\rho_{\mathcal{V}_2}$, be the extremal invariant state associated to $\mathcal{V}_1$, resp. $\mathcal{V}_2$. Consider $Q$ the partial isometry defined previously. Then
    \begin{equation*} 
        \rho_{\mathcal{V}_2} = Q \rho_{\mathcal{V}_1} Q^\dag.
    \end{equation*}
    In addition, if $\rho_{inv} \in \mathcal{D}(\mathcal{H})$ is an invariant state, then
    \begin{itemize}
        \item $P_{\mathcal{V}_1} \rho_{inv} P_{\mathcal{V}_1}$ is proportional to $\rho_{\mathcal{V}_1}$,
        \item $P_{\mathcal{V}_2} \rho_{inv} P_{\mathcal{V}_2}$ is proportional to $\rho_{\mathcal{V}_2}$,
        \item $P_{\mathcal{V}_1} \rho_{inv} P_{\mathcal{V}_2} Q$ is proportional to $\rho_{\mathcal{V}_1}$,
        \item $P_{\mathcal{V}_2} \rho_{inv} P_{\mathcal{V}_1} Q^\dag$ is proportional to $\rho_{\mathcal{V}_2}$.
    \end{itemize}
\end{corollary}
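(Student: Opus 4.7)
The plan is to establish the opening identity $\rho_{\mathcal{V}_2} = Q\rho_{\mathcal{V}_1}Q^\dag$ first, then derive the four listed items in order, using the partial-isometry identity of Proposition \ref{Prop_Partial_Isometry} and the block-by-block action of Proposition \ref{Prop_Phi_Block_by_Block}, supplemented by the uniqueness of the invariant state on a minimal enclosure. For the first identity, let $\sigma := Q\rho_{\mathcal{V}_1}Q^\dag$. Positivity is immediate, $\tr\sigma = \tr(Q^\dag Q\rho_{\mathcal{V}_1}) = 1$, and the support lies in $Q(\mathcal{V}_1) = \mathcal{V}_2$, so $\sigma$ is a density matrix with support in $\mathcal{V}_2$. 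To show $\sigma$ is invariant I would use the family $P_\theta$ constructed inside the proof of Proposition \ref{Prop_Partial_Isometry}: by Proposition \ref{Prop_Cut_Off_Inv_Proj} each $P_\theta$ projects onto an enclosure in $\mathcal{R}$, so Proposition \ref{Prop_Phi_Block_by_Block} makes $P_\theta\rho_{\mathcal{V}_1}P_\theta$ invariant. Expanding it in the basis $\mathcal{V}_1 \oplus \mathcal{V}_2$, the $(\mathcal{V}_2,\mathcal{V}_2)$ entry is $\cos^2\theta\sin^2\theta\,\sigma$, and a second application of Proposition \ref{Prop_Phi_Block_by_Block} (this time to the minimal enclosures $\mathcal{V}_1,\mathcal{V}_2$) makes that block individually invariant; picking $\theta$ with $\cos\theta\sin\theta \ne 0$ gives $\Phi_t(\sigma) = \sigma$. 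Minimality of $\mathcal{V}_2$ and Proposition \ref{Prop_Inv_States_Support} then force $\sigma = \rho_{\mathcal{V}_2}$.

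For the four items, let $A = P_{\mathcal{V}_1}\rho_{inv}P_{\mathcal{V}_1}$, $D = P_{\mathcal{V}_2}\rho_{inv}P_{\mathcal{V}_2}$, $C = P_{\mathcal{V}_1}\rho_{inv}P_{\mathcal{V}_2}$ and $B = C^\dag$; each of these is individually invariant under $\Phi_t$ by Proposition \ref{Prop_Phi_Block_by_Block}. The positive invariants $A$ and $D$ have support in the minimal enclosures $\mathcal{V}_1$ and $\mathcal{V}_2$, so the uniqueness of the invariant state on a minimal enclosure (a direct consequence of Proposition \ref{Prop_Inv_States_Support}) gives $A = \tr(A)\rho_{\mathcal{V}_1}$ and $D = \tr(D)\rho_{\mathcal{V}_2}$, establishing items (1) and (2). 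For items (3) and (4), I would apply the identity of Proposition \ref{Prop_Partial_Isometry} to $\rho = \rho_{inv}$ twice: once with the stated $R = QP_{\mathcal{V}_1} + Q^\dag P_{\mathcal{V}_2}$, and once with $Q$ replaced by $iQ$, which still satisfies $(iQ)^\dag(iQ) = \mathbbm{1}_{\mathcal{V}_1}$, $(iQ)(iQ)^\dag = \mathbbm{1}_{\mathcal{V}_2}$ and $S_t(iQ) = iQ$, so the proposition applies verbatim. Using $QP_{\mathcal{V}_1} = Q$, $Q^\dag P_{\mathcal{V}_2} = Q^\dag$, $P_{\mathcal{V}_1}Q = 0$ and $P_{\mathcal{V}_2}Q^\dag = 0$, the $(\mathcal{V}_1,\mathcal{V}_1)$ block of the two resulting invariants (taken with $j = 1$) is respectively $CQ + (CQ)^\dag$ and $i\bigl(CQ - (CQ)^\dag\bigr)$. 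Both are Hermitian invariants supported in the minimal enclosure $\mathcal{V}_1$, and extending the uniqueness argument used for $A$ to Hermitian operators — by first shifting by a sufficiently large positive multiple of the faithful state $\rho_{\mathcal{V}_1}$ to restore positivity — shows each is a real multiple of $\rho_{\mathcal{V}_1}$. Adding and subtracting the two relations pins $CQ$ as a complex multiple of $\rho_{\mathcal{V}_1}$, which is item (3); running the symmetric computation with $j = 2$ and reading off the $(\mathcal{V}_2,\mathcal{V}_2)$ block produces $BQ^\dag$ as a complex multiple of $\rho_{\mathcal{V}_2}$, which is item (4).

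The main obstacle is the jump from the diagonal-block items to items (3) and (4): the identity of Proposition \ref{Prop_Partial_Isometry} is Hermitian by construction, and a single application only determines the Hermitian part $CQ + (CQ)^\dag$, leaving $\mathrm{Im}(CQ)$ undetermined. The step I expect to be hardest to motivate is recognising that $iQ$ is an equally admissible invariant partial isometry between $\mathcal{V}_1$ and $\mathcal{V}_2$ and can therefore be fed back into Proposition \ref{Prop_Partial_Isometry}, which is precisely what unlocks the missing anti-Hermitian information about $CQ$; the related upgrade of the invariant-state uniqueness statement from positive to Hermitian operators is minor but deserves to be made explicit.
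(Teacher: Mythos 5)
Your proof is correct, but it diverges from the paper's argument in both halves, so a comparison is worth recording. For $\rho_{\mathcal{V}_2}=Q\rho_{\mathcal{V}_1}Q^\dag$ the paper does not use the $P_\theta$ family directly: it applies the $R$-identity of Proposition \ref{Prop_Partial_Isometry} twice, first with $j=1$ to $\rho_{\mathcal{V}_1}$ to produce the invariant off-diagonal matrix with blocks $\rho_{\mathcal{V}_1}Q^\dag$ and $Q\rho_{\mathcal{V}_1}$, then with $j=2$ to that matrix to isolate $Q\rho_{\mathcal{V}_1}Q^\dag$; your route through $P_\theta\rho_{\mathcal{V}_1}P_\theta$ and block extraction via Proposition \ref{Prop_Phi_Block_by_Block} reaches the same conclusion and is arguably more self-contained. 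For items (3)--(4) the difference is more substantive. The paper extends the $R$-identity by linearity to arbitrary (non-Hermitian, non-positive) matrices supported in $\mathcal{R}$, applies it with $j=1$ directly to the off-diagonal block $C=P_{\mathcal{V}_1}\rho_{inv}P_{\mathcal{V}_2}$ to get that $CQ$ itself is invariant, and then invokes the fact that the fixed-point space of the irreducible restricted semigroup on $\mathcal{V}_1$ is one-dimensional over all of $\mathcal{B}(\mathcal{V}_1)$, spanned by $\rho_{\mathcal{V}_1}$. You instead stay within the literal hypotheses of Proposition \ref{Prop_Partial_Isometry} (states), which only yields the Hermitian part $CQ+(CQ)^\dag$, and recover the anti-Hermitian part by rerunning the construction with $iQ$ — a legitimate move, since $iQ$ generates an equally valid family of invariant projectors $P_\theta'$ and hence satisfies the same differentiated identity (strictly speaking you are reusing the \emph{proof} of the proposition, not its statement, which is an existence claim for one particular $Q$); you then need the small upgrade of invariant-state uniqueness on a minimal enclosure from positive to Hermitian operators, which your positivity-shift by $\lambda\rho_{\mathcal{V}_1}$ handles correctly because $\rho_{\mathcal{V}_1}$ is faithful on $\mathcal{V}_1$. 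The trade-off: the paper's version is shorter but leans on two facts it does not prove (the linear extension of the identity beyond states, and one-dimensionality of the full non-Hermitian fixed-point set of an irreducible semigroup, cited from Wolf), whereas yours is longer but elementary, needing only the uniqueness of the invariant \emph{state} on a minimal enclosure.
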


\begin{proof}
    The proof of the first part can be done in two steps. First, apply the equation in Proposition \ref{Prop_Partial_Isometry} with $j=1$ to $\rho_{\mathcal{V}_1}$. After computations, we find that
    \begin{equation*}
        \rho' := \begin{pmatrix} 0 & \rho_{\mathcal{V}_1} Q^\dag \\ Q \rho_{\mathcal{V}_1} & 0 \end{pmatrix}
    \end{equation*}
    is invariant. Applying again the equation in Proposition \ref{Prop_Partial_Isometry} to $\rho'$ with $j=2$, this time, we have that
    \begin{equation*}
        \rho'' := \begin{pmatrix} 0 & 0 \\ 0 & Q \rho_{\mathcal{V}_1} Q^\dag \end{pmatrix}
    \end{equation*}
    is an invariant state. By uniqueness, it must be equal to $\rho_{\mathcal{V}_2}$.

    For the second part, it has to be shown that, given an invariant state $\rho_{inv}$, $P_{\mathcal{V}_1} \rho_{inv} P_{\mathcal{V}_2} Q$ is proportional to $\rho_{\mathcal{V}_1}$. First, note that the equation in Proposition \ref{Prop_Partial_Isometry} can be extended to any matrix with support in $\mathcal{R}$. Thus, applying it with $j=1$ to $P_{\mathcal{V}_1} \rho_{inv} P_{\mathcal{V}_2}$ results in the fact that
    \begin{equation*}
        \Tilde{\rho} := \begin{pmatrix} P_{\mathcal{V}_1} \rho_{inv} P_{\mathcal{V}_2} Q & 0 \\ 0 & 0 \end{pmatrix}
    \end{equation*}
    is an invariant matrix. Since $\left( \left( \Phi_t \right)_{|\mathcal{V}_1} \right)_{t\ge0}$ is irreducible, see \cite{wolf2012quantum}, the span of its eigenoperators is generated by $\rho_{\mathcal{V}_1}$, hence the proportionality. The last relation can be found in the same way by applying the equation in Proposition \ref{Prop_Partial_Isometry} with $j=2$ to $P_{\mathcal{V}_2} \rho_{inv} P_{\mathcal{V}_1}$.
\end{proof}

\vspace{2mm}

Everything is summarized in the following theorem and corollary.

\begin{theorem} [\cite{carbone2016irreducible}] \label{Thm_General_Decomp}
    There exists a decomposition of $\mathcal{H}$ of the form:
    \begin{equation*} 
        \mathcal{H} = \mathcal{D} \oplus \sum_{\alpha\in A} \mathcal{V}_a \oplus \sum_{\beta \in B} \sum_{\gamma \in C_\beta} \mathcal{V}_{\beta, \gamma},
    \end{equation*}
    where $A$ or $B$ can be empty, $C_\beta$ contains at least two elements, and:
    \begin{itemize}
        \item All the $\mathcal{V}_\alpha$ and $\mathcal{V}_{\beta, \gamma}$ are mutually orthogonal minimal enclosures,
        \item Any minimal enclosure that is not orthogonal to $\sum_{\gamma \in C_\beta} \mathcal{V}_{\beta, \gamma}$ is contained in this sum.
    \end{itemize}
\end{theorem}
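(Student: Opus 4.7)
The plan is to start from the orthogonal decomposition of $\mathcal{R}$ into minimal enclosures given by Theorem~\ref{Thm_First_Decomp_R}, and then merge those pieces that are linked through some third minimal enclosure. Since $\mathcal{H} = \mathcal{D} \oplus \mathcal{R}$ already isolates the transient part, all the work lies inside $\mathcal{R}$.

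Fix an orthogonal decomposition $\mathcal{R} = \bigoplus_{i \in I} \mathcal{U}_i$ into minimal enclosures, provided by Theorem~\ref{Thm_First_Decomp_R}. On the index set $I$ I would introduce the relation
$$
i \sim_0 j \iff \text{there exists a minimal enclosure } \mathcal{W} \text{ not orthogonal to both } \mathcal{U}_i \text{ and } \mathcal{U}_j,
$$
and take its transitive closure to obtain an equivalence relation $\sim$. Singletons of this equivalence are the indices $\alpha \in A$ with $\mathcal{V}_\alpha := \mathcal{U}_\alpha$, while classes of cardinality at least two are labelled by $\beta \in B$, with the class itself playing the role of $C_\beta$ and its members being renamed $\mathcal{V}_{\beta,\gamma}$. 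The first bullet of the theorem is then inherited from orthogonality of the starting decomposition, and the constraint $|C_\beta| \ge 2$ holds by construction.

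The heart of the proof is the second bullet. Let $\mathcal{W}$ be a minimal enclosure not orthogonal to $\sum_{\gamma \in C_\beta} \mathcal{V}_{\beta,\gamma}$. By Proposition~\ref{Prop_Inv_States_Support}, every minimal enclosure is the support of an extremal invariant state, hence $\mathcal{W} \subseteq \mathcal{R} = \bigoplus_i \mathcal{U}_i$. Pick some $\gamma_0 \in C_\beta$ with $\mathcal{W}$ not orthogonal to $\mathcal{V}_{\beta,\gamma_0}$. If $\mathcal{W}$ were also not orthogonal to some $\mathcal{U}_i$ with $i$ outside the class $C_\beta$, then $\mathcal{W}$ itself would witness a direct $\sim_0$-edge between $\gamma_0$ and $i$, contradicting the fact that they lie in different equivalence classes. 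Hence $\mathcal{W}$ is orthogonal to every $\mathcal{U}_i$ outside $C_\beta$, and decomposing $\mathcal{W}$ along the blocks of $\mathcal{R}$ forces $\mathcal{W} \subseteq \sum_{\gamma \in C_\beta} \mathcal{V}_{\beta,\gamma}$, as required.

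I do not expect a serious obstacle here: the combinatorial clustering by $\sim$ absorbs the rigidity coming from Propositions~\ref{Prop_Non_Unique_Decomp} and~\ref{Prop_Partial_Isometry} into one equivalence relation, and the construction is then essentially automatic. The only point deserving explicit justification is that every minimal enclosure is contained in $\mathcal{R}$, which is what allows one to decompose $\mathcal{W}$ along the blocks $\mathcal{U}_i$; this is exactly the content of part 3 of Proposition~\ref{Prop_Inv_States_Support}. Note also that the theorem only asserts \emph{existence} of such a decomposition, so we are free to choose any starting orthogonal splitting of $\mathcal{R}$, and no intrinsicness needs to be verified at this stage.
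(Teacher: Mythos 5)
Your construction is correct and is essentially the argument behind the theorem as stated (the paper itself gives no proof, deferring to Carbone--Pautrat, whose proof is exactly this clustering of an orthogonal decomposition of $\mathcal{R}$ by the transitive closure of ``linked through a common non-orthogonal minimal enclosure,'' followed by the absorption check you carry out). The only point to tighten is the claim $\mathcal{W} \subseteq \mathcal{R}$: Proposition~\ref{Prop_Inv_States_Support} is phrased for minimal enclosures already inside $\mathcal{R}$, so you should add the one-line finite-dimensional argument that any enclosure contains the support of an invariant state of the restricted semigroup, whence a \emph{minimal} enclosure coincides with such a support and therefore lies in $\mathcal{R}$.
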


\begin{corollary} [\cite{carbone2016irreducible}]
    Let $\rho_{inv} \in \mathcal{D}(\mathcal{H})$ be an invariant state. Consider an orthogonal decomposition of $\mathcal{H}$ as in Theorem \ref{Thm_General_Decomp}. Then:
    \begin{itemize}
        \item $\rho_{inv}$ has support in $\mathcal{R}$.
        \item $P_{\mathcal{V}_\alpha} \rho_{inv} P_{\mathcal{V}_\alpha}$ is proportional to $\rho_{\mathcal{V}_\alpha}$. Same for $\mathcal{V}_{\beta,\gamma}$.
        \item For each $\beta$ and each couple $\gamma \ne \gamma' \in C_\beta$, there exist a partial isometry $Q_{\beta,\gamma,\gamma'}$ as described in Corollary \ref{Cor_Non_Unique_Minimal_Enclosures}.
        \item All the other off-diagonal blocks of $\rho_{inv}$ are zero.
    \end{itemize}
\end{corollary}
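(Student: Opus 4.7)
My plan is to address the four assertions in order, exploiting previously proved results with minimal additional work. The first claim, that $\supp(\rho_{inv}) \subseteq \mathcal{R}$, is immediate from the definition of $\mathcal{R}$ as the supremum of supports of invariant states. For the second, I would fix any minimal enclosure $\mathcal{V}$ appearing in the decomposition (be it $\mathcal{V}_\alpha$ or $\mathcal{V}_{\beta,\gamma}$) and apply Proposition \ref{Prop_Phi_Block_by_Block} to get that $P_\mathcal{V}\rho_{inv}P_\mathcal{V}$ evolves independently of the rest of the blocks; the invariance of $\rho_{inv}$ then transfers to this block. If it is nonzero, its renormalization is an invariant state supported in the minimal enclosure $\mathcal{V}$, and by Proposition \ref{Prop_Inv_States_Support} the only such state is the extremal $\rho_\mathcal{V}$, yielding the stated proportionality.

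The third assertion is a direct consequence of the defining feature of the grouping in Theorem \ref{Thm_General_Decomp}: two distinct minimal enclosures $\mathcal{V}_{\beta,\gamma}$ and $\mathcal{V}_{\beta,\gamma'}$ are placed in a common group $C_\beta$ precisely because $\mathcal{V}_{\beta,\gamma}\oplus\mathcal{V}_{\beta,\gamma'}$ fails to decompose uniquely into minimal enclosures. Proposition \ref{Prop_Partial_Isometry} then produces the partial isometry $Q_{\beta,\gamma,\gamma'}$ between them, and Corollary \ref{Cor_Non_Unique_Minimal_Enclosures} specifies the form taken by the corresponding off-diagonal block of $\rho_{inv}$ in terms of $Q_{\beta,\gamma,\gamma'}$.

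The fourth assertion is where the main work lies. For a pair $(\mathcal{V},\mathcal{V}')$ of minimal enclosures of the decomposition that lie in \emph{distinct} groups, I aim to apply the corollary immediately following Proposition \ref{Prop_Non_Unique_Decomp} to conclude $P_\mathcal{V}\rho_{inv}P_{\mathcal{V}'}=0$; this reduces the task to showing that $\mathcal{V}\oplus\mathcal{V}'$ admits a unique decomposition into minimal enclosures. I would argue by contradiction: suppose there is a witness minimal enclosure $\mathcal{W}\notin\{\mathcal{V},\mathcal{V}'\}$ not simultaneously orthogonal to both and meeting $\mathcal{V}\oplus\mathcal{V}'$ nontrivially; without loss of generality $\mathcal{W}$ is not orthogonal to $\mathcal{V}$. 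The second bullet of Theorem \ref{Thm_General_Decomp} then confines $\mathcal{W}$ to the group of $\mathcal{V}$, and since distinct groups are mutually orthogonal by the first bullet, $\mathcal{W}$ is orthogonal to $\mathcal{V}'$; consequently the intersection $(\mathcal{V}\oplus\mathcal{V}')\cap\mathcal{W}$ collapses to $\mathcal{V}\cap\mathcal{W}$, which is itself an enclosure. Its being nonzero together with the minimality of $\mathcal{V}$ and of $\mathcal{W}$ forces $\mathcal{W}=\mathcal{V}$, a contradiction. The principal technical obstacle I anticipate is extending the ``non-orthogonality implies containment'' property from the grouped sums $\sum_{\gamma\in C_\beta}\mathcal{V}_{\beta,\gamma}$, where it is stated explicitly in Theorem \ref{Thm_General_Decomp}, to the singleton enclosures $\mathcal{V}_\alpha$; this is implicit in the construction of the decomposition, since an $\mathcal{V}_\alpha$ having a non-orthogonal foreign minimal enclosure would have been absorbed into some $C_\beta$ to begin with.
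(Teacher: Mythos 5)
The paper states this corollary without giving a proof, so your argument has to stand on its own; it does, and it follows the natural route through the earlier results. The first two bullets are handled exactly as in the paper's proof of Lemma \ref{Lemma_Equality_Dim} (block-wise invariance via Proposition \ref{Prop_Phi_Block_by_Block} plus uniqueness of the extremal state on a minimal enclosure from Proposition \ref{Prop_Inv_States_Support}), and your treatment of the last bullet --- reduce to pairwise uniqueness of $\mathcal{V}\oplus\mathcal{V}'$ for enclosures in distinct groups, then invoke the corollary after Proposition \ref{Prop_Non_Unique_Decomp} --- together with the observation that $(\mathcal{V}\oplus\mathcal{V}')\cap\mathcal{W}=\mathcal{V}\cap\mathcal{W}$ once $\mathcal{W}\perp\mathcal{V}'$, and that a nonzero intersection of two minimal enclosures forces them to coincide, is sound.

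Two points deserve to be made explicit rather than left to "the construction". First, for the third bullet you assert that $\gamma\neq\gamma'$ lying in the same $C_\beta$ means the decomposition of $\mathcal{V}_{\beta,\gamma}\oplus\mathcal{V}_{\beta,\gamma'}$ is non-unique; the statement of Theorem \ref{Thm_General_Decomp} does not literally say this (it only gives the absorption property of the whole sum $\sum_{\gamma\in C_\beta}\mathcal{V}_{\beta,\gamma}$), so you need either to appeal to how the classes $C_\beta$ are built (as equivalence classes), or to note that the relation is transitive: the intertwining partial isometries produced by Proposition \ref{Prop_Partial_Isometry} are fixed points of the cut off semigroup, the fixed-point set of $(S_t)$ on $\mathcal{R}$ is an algebra (there is a faithful invariant state on $\mathcal{R}$), so $Q_{\gamma''\gamma'}Q_{\gamma\gamma''}$ is again an invariant partial isometry and the $P_\theta$ construction yields non-uniqueness for the pair $(\gamma,\gamma')$. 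Second, the extension of "non-orthogonality implies containment" to the singletons $\mathcal{V}_\alpha$, which you flag, can be closed directly: if a minimal enclosure $\mathcal{W}\neq\mathcal{V}_\alpha$ were not orthogonal to $\mathcal{V}_\alpha$, then $P_{\mathcal{V}_\alpha}\rho_{\mathcal{W}}P_{\mathcal{V}_\alpha}\neq 0$ and the argument of Lemma \ref{Lemma_Equality_Dim} and Proposition \ref{Prop_Partial_Isometry} produces an invariant partial isometry between $\mathcal{V}_\alpha$ and $\mathcal{W}$, so $\mathcal{V}_\alpha$ would have been placed in a class of size at least two, contradicting its being indexed by $A$. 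With these two additions the proof is complete.
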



\section{Application in dimension 2} \label{Section_App_Dim_2}

To better understand what are the different possible decompositions of a Hilbert space, let us look at the general case when $\mathcal{H} = \mathbb{C}^2$. Recall that the semigroup $\left( \Phi_t \right)_{t\ge0}$ has at least one invariant state and that invariant states are states $\rho$ that satisfies $\mathcal{L}(\rho) = 0$.

\subsection{Only one extremal invariant state}

Assume that the semigroup has exactly one invariant state, which will necessarily be extremal. Then, there are two possible options:

\subsubsection{Faithful invariant state}

It is the case for example if we choose
$H = 0, \ L_1 = |e_1\rangle \langle e_2 | \ \mathrm{and} \ L_2 = |e_2\rangle \langle e_1 |$.
Indeed, for $\rho = \begin{pmatrix} a & b \\ \overline{b} & c \end{pmatrix}$ a density matrix, we have the following Lindbladian: 
$\mathcal{L} (\rho) = \begin{pmatrix} c-a & -b \\ - \overline{b} & a-c \end{pmatrix}$.
Therefore, $\mathcal{L}(\rho)=0$ iff $\rho = \frac{1}{2} \mathbbm{1}_{\mathbb{C}^2}$. In particular, $\mathcal{R} = \mathbb{C}^2$ and $\mathcal{D} = {0}$.

\subsubsection{Unfaithful invariant state}

On the contrary, if we drop $L_2$, we have
$\mathcal{L} (\rho) = \begin{pmatrix} c & -b/2 \\ - \overline{b}/2 & -c \end{pmatrix}$.
Thus, $\mathcal{L}(\rho) = 0$ iff $\rho = | e_1 \rangle \langle e_1 |$. The extremal invariant state is not faithful, wich leads to the following decomposition : $\mathcal{R} = \Span(|e_1\rangle)$ and $\mathcal{D} = \Span(|e_2\rangle)$.

\subsection{Exactly two extremal invariant states}

The semigroup can also have exactly two extremal invariant states. For instance, if we take:
$H = 0 \ \mathrm{and} \ L= |e_1\rangle \langle e_1 |$,
we have
$\mathcal{L}(\rho) = \begin{pmatrix} 0 & - b/2 \\ - \overline{b}/2 & 0 \end{pmatrix}$.
Therefore, $\mathcal{L}(\rho) = 0$ iff $\rho = \rho_1 = | e_1 \rangle \langle e_1 |$, $\rho = \rho_2 = | e_2 \rangle \langle e_2 |$ or $\rho$ is a linear combination of $\rho_1$ and $\rho_2$. This way, we have exactly two minimal enclosures: $\mathcal{V}_1 = \Span(|e_1\rangle)$ and $\mathcal{V}_2 = \Span(|e_2\rangle)$.

\subsection{At least three extremal invariant states}

To consider this last case, we work in the basis of $\mathcal{M}_2(\mathbb{C})$ formed by Pauli's matrices: $\left( \mathbbm{1}_{\mathbb{C}^2}, \sigma_X, \sigma_Y, \sigma_Z \right)$. Since for any $M \in \mathcal{M}_2 (\mathbb{C})$, $\mathcal{L}(M)$ must have trace equals to zero, the matrix form of $\mathcal{L}$ in this basis is necessarily: 
$$\mathcal{L} = \begin{pmatrix} 0 & 0 \\ a & A \end{pmatrix},$$
with $a \in \mathbb{C}^3$ and $A \in \mathcal{M}_3(\mathbb{C})$. Recall that, according to the Bloch sphere formalism, any density matrix can be written as $\rho = \frac{1}{2} \mathbbm{1}_{\mathbb{C}^2} + u \cdot \sigma$ with $u \in \mathbb{R}^3, ||u|| \le 1$ and $\sigma = \left( \sigma_X, \sigma_Y, \sigma_Z \right)$. So, $\mathcal{L}(\rho) = 0$ is equivalent to $Au = - a/2$.

Let's assume that there are at least three extremal invariant states. This means that there is a basis of $\mathbb{R}^3$, $\left( u_1, u_2, u_3 \right)$, that is solution of this system. Thus, any vector $u\in \mathbb{R}^3$ is solution of the system, so any density matrix is invariant. In other words, $\mathcal{L}$ is the null superoperator.

Note that in this case, the family of Hermitian projectors $\left( P_\theta \right)_{\theta \in [0,\pi[}$ described in the proof of Proposition \ref{Prop_Partial_Isometry} corresponds to the family of Hermitian projectors on one-dimensional subspaces of $\mathbb{C}^2$. This shows that any one-dimensional subspace is a minimal enclosure, which is again possible iff $\mathcal{L}$ is the null superoperator.


\section{Application to Open quantum random walks}

The model of \textit{open quantum random walks} has been introduced in \cite{attal2012open} by Attal, Pettruccione, Sabot and Sinayskiy. The starting point is an oriented graph $\left\{ (i,j); \ i,j \in \mathcal{G} \right\}$, which we will consider finite, and two Hilbert spaces $\mathcal{H}$ and $\mathcal{K}$. The latter is chosen so that its dimension is equal to the number of vertices of $\mathcal{G}$. Each edge $(i,j)$ is associated with an operator $B^i_j \in \mathcal{B(H)}$ such that:
\begin{equation*}
    \forall j, \hspace{2mm} \sum_i \left( B^i_j \right)^\dag B^i_j = \mathbbm{1}_\mathcal{H}.
\end{equation*}
Then, define the following bounded operators on $\mathcal{H} \otimes \mathcal{K}$:
\begin{equation*}
    V^i_j = B^i_j \otimes | i \rangle \langle j |,
\end{equation*}
where $( | i \rangle )_{i}$ is an orthonormal basis of $\mathcal{K}$. Thus, 
\begin{equation*}
    \Phi := \sum_{i,j} V^i_j \cdot \left( V^i_j \right)^\dag
\end{equation*}
is a quantum channel on $\mathcal{H} \otimes \mathcal{K}$. The random walk is the following: if the system state at time $n$ is $\rho^{(n)} \otimes | i \rangle \langle i |$, then at time $n+1$ it will become $B^i_j \rho^{(n)} \left( B^i_j \right)^\dag \otimes | j \rangle \langle j |$ with probability $\tr \left( B^i_j \rho^{(n)} \left( B^i_j \right)^\dag \right)$. Thus, we are performing a random walk in $\mathcal{K}$ dictated by the evolution in $\mathcal{H}$.

In \cite{pellegrini2014continuous}, Pellegrini applied the same idea in continuous time, leading to continuous-time open quantum random walks. A semigroup of quantum channels $\left( \Phi_t \right)_{t\ge0}$ appears, generated by the usual Lindbladian on $\mathcal{H} \otimes \mathcal{K}$:
\begin{equation*}
    \mathcal{L} = - \imath [H, \cdot] + \sum_{i,j} L^i_j \cdot \left( L^i_j \right)^\dag - \frac{1}{2} \left\{ \left( L^i_j \right)^\dag L^i_j, \cdot \right\},
\end{equation*}
where $L^i_j = B^i_j \otimes |i \rangle \langle j | \hspace{2mm} \mathrm{and} \hspace{2mm} H = \sum_i H_i \otimes | i \rangle \langle i |$.

We propose to study the decomposition of the Hilbert space $\mathcal{H}$ for a specific class of open quantum random walks: those equivalent to continuous-time Markov chains. According to \cite{carbone2016open}, we call them \textit{minimal OQRW} in the sense that we restrict the first Hilbert space $\mathcal{H}$ to $\mathbb{C}$. This way, $\mathcal{H} \otimes \mathcal{K}$ is isomorphic to $\mathbb{C}^n$.

A continuous-time Markov chain in a state space of $n$ elements can be represented by a rate matrix $Q = \left( q_{i,j} \right)_{1 \le i,j \le n}$ such that:
$$\forall i \ne j, q_{i,j} \ge 0,$$
$$\forall i, \sum_j q_{i,j} = 0, \ {\it i.e.}, \ q_{i,i} = -\sum_{j\ne i} q_{i,j} \le 0.$$
According to \cite{pellegrini2014continuous}, to encode this process in an open quantum random walk, we can choose:
\begin{equation} \label{Eq_minimal_OQRW}
    H = 0, \hspace{2mm} \forall i \ge 0, \hspace{2mm} B_i^i=0, \hspace{2mm} \mathrm{and} \hspace{2mm} \forall i\ne j, \ B^i_j = \sqrt{q_{i,j}}.
\end{equation}

\begin{theorem}
    The invariant states of a minimal OQRW described by Equation (\ref{Eq_minimal_OQRW}) are the invariant measures of the continuous-time Markov chain with rate matrix Q. Its minimal enclosures are the closed communication classes of the Markov chain.
\end{theorem}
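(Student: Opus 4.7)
The plan is to exploit the very special structure of the Lindbladian in the minimal OQRW: since $\mathcal{H} = \mathbb{C}$, each $B^i_j$ is simply the scalar $\sqrt{q_{i,j}}$, and the jump operators reduce to $L^i_j = \sqrt{q_{i,j}}\,|i\rangle\langle j|$ acting on $\mathcal{K} \cong \mathbb{C}^n$, with $H = 0$. The first step is to write $\mathcal{L}$ explicitly in the canonical basis $(|i\rangle)_i$ and notice that it preserves the direct sum decomposition of $\mathcal{M}_n(\mathbb{C})$ into diagonal and off-diagonal matrices, so these two blocks can be analyzed independently.

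For the diagonal block, a direct computation shows that if $\rho = \sum_k p_k |k\rangle\langle k|$, then $\mathcal{L}(\rho)$ is again diagonal and its entries are precisely the right-hand side of the Kolmogorov forward equation associated to $Q$; hence $\rho \mapsto (p_i)_i$ is a bijection between diagonal invariant states of the semigroup and invariant probability measures of the CTMC. For the off-diagonal block, another short computation gives $\dot{\rho}_{kl} = -\tfrac{1}{2}(\mu_k + \mu_l)\,\rho_{kl}$ for $k \ne l$, where $\mu_k := \sum_i |B^i_k|^2$ is the total rate leaving state $k$; consequently every coherence decays exponentially unless both $k$ and $l$ are absorbing. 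Combining the two analyses pins down the invariant states and, together with Proposition \ref{Prop_Inv_States_Support}, identifies $\mathcal{R}$ with the span of those $|i\rangle$ whose index $i$ belongs to some closed communication class.

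The classification of minimal enclosures then follows. If $C$ is a closed communication class, $\Span(\{|i\rangle : i \in C\})$ is an enclosure (no mass can leak out of $C$) and it is minimal: by the first step, any invariant state supported in this span has diagonal an invariant measure of $Q$, and the unique invariant measure of the irreducible restriction $Q|_C$ has full support on $C$, which rules out strictly smaller enclosures via Proposition \ref{Prop_Inv_States_Support}. Conversely, the orthogonal direct sum of these spans over all closed classes exhausts $\mathcal{R}$, yielding the canonical orthogonal decomposition promised by Theorem \ref{Thm_First_Decomp_R}.

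The delicate point I expect to handle most carefully is the genuinely quantum phenomenon occurring when several absorbing states coexist: arbitrary superpositions $\alpha|k\rangle + \beta|l\rangle$ of two absorbing states $k, l$ yield additional one-dimensional minimal enclosures that are not orthogonal to $\Span(|k\rangle)$ or $\Span(|l\rangle)$ and have no classical counterpart. These are exactly the extra minimal enclosures produced by the non-uniqueness mechanism of Theorem \ref{Thm_General_Decomp} and Corollary \ref{Cor_Non_Unique_Minimal_Enclosures}, so the statement of the theorem is to be read as identifying the minimal enclosures appearing in the canonical orthogonal decomposition of $\mathcal{R}$, in which such extra superpositions do not appear.
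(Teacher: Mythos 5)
Your proof is correct and follows essentially the same route as the paper: an explicit computation of the Lindbladian in the canonical basis showing that the diagonal entries obey the Kolmogorov forward equation while each coherence $\rho_{kl}$ evolves as $\dot\rho_{kl}=\tfrac12(q_{kk}+q_{ll})\rho_{kl}$, forcing invariant states to be diagonal with an invariant measure of $Q$ on the diagonal, whence the identification of minimal enclosures with closed communication classes. The only divergence is that the paper simply assumes $q_{jj}\ne 0$ for all $j$ (``state space chosen wisely'') to exclude the degenerate case of several absorbing states, whereas you treat that case explicitly via the non-uniqueness mechanism of Corollary \ref{Cor_Non_Unique_Minimal_Enclosures} --- a welcome refinement.
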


\begin{proof}
    The Lindbladian is as follows:
    \begin{equation*} \begin{split}
        \mathcal{L} (\rho) &= \sum_{i \ne j} q_{i,j} | j \rangle \langle i | \rho | i \rangle \langle j | \\ & \hspace{10mm} - \frac{1}{2} \sum_i ( \sum_{j \ne i} q_{i,j} ) \left( | i \rangle \langle i | \rho + \rho | i \rangle \langle i | \right)
.    \end{split} \end{equation*}
    By definition, $\sum_{j \ne i} q_{i,j} = - q_{i,i}$. Moreover, using that $\rho = \sum_{k,l} \langle k | \rho | l \rangle | k \rangle \langle l |$, we obtain the following expression:
    \begin{equation*} \begin{split}
        \mathcal{L}(\rho) &= \sum_j \left( \sum_i q_{i,j} \langle i | \rho | i \rangle \right) | j \rangle \langle j | \\
        & \hspace{10mm} + \frac{1}{2} \sum_{k \ne j} \left( q_{j,j} + q_{k,k} \right) \langle j | \rho | k \rangle |j \rangle \langle k |.
    \end{split} \end{equation*}
    Therefore, $\mathcal{L}(\rho) = 0$ if and only if:
    \begin{itemize}
        \item $\forall j, \ \sum_i q_{i,j} \langle i | \rho | i \rangle = 0$,
        \item $\forall j \ne k, \ \left( q_{j,j} + q_{k,k} \right) \langle j | \rho | k \rangle = 0$.
    \end{itemize}
    Assuming that the state space is chosen wisely, for all $j$, $q_{j,j} \ne 0$. Therefore, an invariant state will necessarily be diagonal. 
    On the other hand, if $\pi_i = \langle i | \rho | i \rangle$, the first equation becomes $\sum_i q_{i,j} \pi_i = 0$ for all $j$, which is exactly the definition of an invariant measure of $Q$, see \cite{anderson2012continuous}. Consequently, the extremal invariant states are exactly the diagonal states with an extremal invariant measure on their diagonal.
    
    Recall that the support of the extremal invariant measures of a continuous-time Markov chain are the closed communication classes, a closed communication class being a part of the state space in which the process stay stuck forever once reached and where each component can be attained with positive probability. As a result, the Hilbert space can be decomposed into an orthogonal sum of communicating classes. In this particular case, the notions of minimal enclosures and of communication classes come together.
\end{proof}

\vspace{2mm}

Another situation where these two notions are connected, though in a trivial way, is for quantum trajectories under both a purification and an ergodicity assumption, see \cite{benoist2021invariant}. Under these assumptions, a quantum trajectory has only one invariant measure and only one minimal enclosure. 

The next section takes a closer look at quantum trajectories and considers sufficient conditions to ensure the uniqueness of the decomposition of the associated Hilbert space.


\section{Application to quantum trajectories}

A \textit{quantum trajectory} models the time evolution of an open quantum system undergoing repeated indirect measurements in discrete or continuous time. Its mean evolution, understood without measurement, is described by a quantum dynamical semigroup. Therefore, we can decompose the underlying Hilbert space with respect to this semigroup, as presented earlier. This sheds light on the mean behavior in large time of the quantum trajectory.  What is remarkable is that even with indirect measurement, the system's state can still converge towards one of the minimal enclosures at large time, see for example \cite{amini2021asymptotic, amini2024exponential, benoist2014large}. In each paper, an assumption is made to ensure the selection of a minimal enclosure, known as \textit{non degeneracy} or \textit{identifiability}. However, it has never been verified that, under these assumptions, the decomposition of the Hilbert space is unique. This should be a natural result, as it would be strange to select a minimal enclosure that could vanish at any time or merge with another. We propose to verify the uniqueness of the decomposition under these assumptions.

\subsection{Quantum nondemolition measurement in continuous time}

Let us begin our study with a quantum trajectory in continuous time undergoing a \textit{nondemolition measurement}, see \cite{benoist2014large} for details. According to Theorem 2 in \cite{benoist2014large}, the nondemolition condition is equivalent to saying that all operators $H$ and $L_j$ appearing in the Lindbladian are diagonal in the same basis $\left( | \alpha \rangle \right)_{0 \le \alpha \le n}$. In details, we write:
$$H = \sum_\alpha \epsilon(\alpha) | \alpha \rangle \langle \alpha |, \ \epsilon(\alpha) \in \mathbb{R},$$
$$L_j = \sum_{\alpha} c(j|\alpha) | \alpha \rangle \langle \alpha |, \ c(j|\alpha) \in \mathbb{C}. $$
The fact that $H$ and the $L_j$ are diagonal implies that the pointer states $| \alpha \rangle \langle \alpha |$ are extremal invariant states. In this way, we can directly infer a first decomposition of the Hilbert space:
$$\mathcal{H} = \bigoplus_\alpha \Span (|\alpha \rangle).$$
While this decomposition is always used in practice, there is no reason for it to be the only possible one. Indeed, to show selection of a pointer state by the quantum trajectory as time goes to infinity, a second assumption is required: the \textit{non degeneracy assumption}. This assumption is expressed in two parts since there exists an integer $0 \le p \le n$ such that for $0 \le j \le p$, $L_j$ is associated to a Wiener process, while for $p+1 \le j \le n$, $L_j$ is associated to a Poisson process, see the \textit{stochastic master equation} in \cite{benoist2014large}. Denoting $r(j|\alpha) = c(j|\alpha) + \overline{c(j|\alpha)}$ and $\theta(j|\alpha) = |c(j|\alpha)|^2$, we assume that for all $\alpha \ne \beta$:
\begin{itemize}
    \item either there exists $0 \le j \le p$ such that $r(j|\alpha) \ne r(j|\beta)$,
    \item or there exists $p+1 \le j \le n$ such that $\theta(j|\alpha) \ne \theta(j|\beta)$.
\end{itemize}

\begin{theorem}
    The decomposition of the Hilbert space associated with a quantum trajectory undergoing nondemolition measurements, under the non degeneracy assumption, is unique.
\end{theorem}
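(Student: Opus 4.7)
The plan is to bypass the partial-isometry machinery entirely by proving that, under the non-degeneracy assumption, the only minimal enclosures of the system are the pointer lines $\Span(|\alpha\rangle)$. Since those lines are mutually orthogonal and span $\mathcal{H}$, this immediately forces the decomposition of $\mathcal{H}$ into minimal enclosures to be unique.

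First I would diagonalise $\mathcal{L}$ in the natural basis $\{|\alpha\rangle\langle\beta|\}$ of $\mathcal{B(H)}$. Because $H$ and every $L_j$ are diagonal in $(|\alpha\rangle)_\alpha$, a direct computation gives
\begin{equation*}
    \mathcal{L}\bigl(|\alpha\rangle\langle\beta|\bigr) = \mu_{\alpha\beta}\,|\alpha\rangle\langle\beta|,
\end{equation*}
with
\begin{equation*}
    \mu_{\alpha\beta} = -\imath\bigl(\epsilon(\alpha)-\epsilon(\beta)\bigr) + \sum_j \Bigl( c(j|\alpha)\overline{c(j|\beta)} - \tfrac12 |c(j|\alpha)|^2 - \tfrac12 |c(j|\beta)|^2 \Bigr).
\end{equation*}
Collecting real parts yields $\mathrm{Re}(\mu_{\alpha\beta}) = -\tfrac12 \sum_j |c(j|\alpha)-c(j|\beta)|^2 \le 0$, with equality iff $c(j|\alpha) = c(j|\beta)$ for every $j$. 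The key observation is that equality would give $r(j|\alpha) = r(j|\beta)$ and $\theta(j|\alpha) = \theta(j|\beta)$ simultaneously for every $j$, contradicting non-degeneracy. Hence $\mu_{\alpha\beta} \neq 0$ whenever $\alpha \neq \beta$.

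Next I would translate this spectral information into a description of the invariant states. Expanding $\rho = \sum_{\alpha,\beta} \rho_{\alpha\beta}\,|\alpha\rangle\langle\beta|$ gives $\mathcal{L}(\rho) = \sum_{\alpha,\beta} \mu_{\alpha\beta}\rho_{\alpha\beta}\,|\alpha\rangle\langle\beta|$, so $\mathcal{L}(\rho) = 0$ forces $\rho_{\alpha\beta} = 0$ for every $\alpha \neq \beta$. Thus the invariant states form exactly the simplex of diagonal density matrices in the pointer basis, whose extreme points are the pointer states $|\alpha\rangle\langle\alpha|$. In particular, the uniform mixture $\mathbbm{1}_\mathcal{H}/\dim\mathcal{H}$ is invariant with full support, so $\mathcal{R} = \mathcal{H}$, and by Proposition \ref{Prop_Inv_States_Support} the minimal enclosures are exactly the lines $\Span(|\alpha\rangle)$. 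No other minimal enclosure exists, so the orthogonal decomposition $\mathcal{H} = \bigoplus_\alpha \Span(|\alpha\rangle)$ is the unique decomposition of $\mathcal{H}$ into minimal enclosures; equivalently, in the notation of Theorem \ref{Thm_General_Decomp}, the set $B$ is empty.

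The main obstacle is purely computational: verifying the eigenvalue $\mu_{\alpha\beta}$ and recognising its real part as $-\tfrac12 \sum_j |c(j|\alpha)-c(j|\beta)|^2$. Once this identity is in hand, non-degeneracy translates directly into $\mu_{\alpha\beta} \neq 0$, and every remaining step is a soft consequence of Proposition \ref{Prop_Inv_States_Support} and the framework of Section 3; in particular, the partial isometry of Proposition \ref{Prop_Partial_Isometry} never needs to be constructed explicitly.
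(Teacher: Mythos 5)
Your proof is correct and follows essentially the same route as the paper: diagonalise $\mathcal{L}$ on the matrix units $|\alpha\rangle\langle\beta|$, identify the real part of the off-diagonal eigenvalue as $-\tfrac12\sum_j|c(j|\alpha)-c(j|\beta)|^2$, and use the non-degeneracy assumption to make it strictly negative, which forces every invariant state to be diagonal in the pointer basis. The only (cosmetic) difference is the endgame: the paper concludes by contraposition via Corollary \ref{Cor_Non_Unique_Minimal_Enclosures}, whereas you conclude directly by identifying all minimal enclosures as the pointer lines through Proposition \ref{Prop_Inv_States_Support}; both are valid.
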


\begin{proof}
    To prove the uniqueness of the decomposition, as usual, we have to look at the invariant states of the Lindbladian. To begin with, let's express it explicitly. For a density matrix $\rho$, we have:
    \begin{equation*} \begin{split}
        H \rho - \rho H = \sum_{\alpha, \beta} \langle \alpha | \rho | \beta \rangle \left( \epsilon (\alpha) - \epsilon (\beta) \right) | \alpha \rangle \langle \beta |, \hspace{8mm} \\
        L_j \rho L_j^\dag - \frac{1}{2} \left\{ L_j^\dag L_j, \rho \right\} \hspace{50mm} \\
        = \sum_{\alpha, \beta} \langle \alpha | \rho | \beta \rangle \left( c(j|\alpha) \overline{c(j|\beta)} - \frac{| c(j|\alpha) |^2}{2} - \frac{| c(j|\beta) |^2}{2} \right).
    \end{split} \end{equation*}
    Note that both expressions are zero when $\alpha = \beta$. So, the Lindbladian is $\mathcal{L} (\rho) = \sum_{\alpha \ne \beta} \langle \alpha | \rho | \beta \rangle \omega(\alpha,\beta) |\alpha\rangle \langle \beta |$ with
    \begin{equation*} \begin{split}
        \omega(\alpha,\beta) &= \imath \left( \epsilon(\alpha) - \epsilon(\beta) \right) \\
        & \hspace{5mm} + \sum_j \left( c(j|\alpha) \overline{c(j|\beta)} - \frac{1}{2} \theta(j|\alpha) - \frac{1}{2} \theta(j|\beta) \right).
    \end{split} \end{equation*}    
    
    If, under the non degeneracy condition, $\omega(\alpha,\beta)$ is non-zero, this will force any invariant state to be diagonal in the basis of the pointer states. Therefore, we will have proven the uniqueness of the decomposition. Indeed, otherwise, we should be able to find a non-diagonal invariant state according to Corollary \ref{Cor_Non_Unique_Minimal_Enclosures}. Given $\alpha \ne \beta$, denote
    $c(j|\alpha) = a_j^\alpha + \imath b_j^\alpha$ with $a_j^\alpha, b_j^\alpha \in \mathbb{R}$. This way, the real part of $\omega(\alpha,\beta)$ can be written as:
    \begin{equation*}
        \Re{\left( \omega(\alpha,\beta) \right)} = - \frac{1}{2} \sum_j \left( \left( a_j^\alpha - a_j^\beta \right)^2 + \left( b_j^\alpha - b_j^\beta \right)^2 \right).
    \end{equation*}
    Noting that $r(j|\alpha) = a_j^\alpha$ and $\theta(j|\alpha) = (a_j^\alpha)^2 + (b_j^\alpha)^2$, under the non degeneracy condition, $\Re{\left( \omega(\alpha,\beta) \right)}$ is always negative for any $\alpha \ne \beta$, which finishes the proof.
\end{proof}

\hspace{2mm}

The proof provides an explicit necessary and sufficient condition for the uniqueness of the Hilbert space decomposition of a quantum trajectory under nondemolition measurement. However, in general, finding such a condition is not always easy.

\subsection{Continuous time without transient subspace}

We can now drop the nondemolition condition to consider more general cases. In \cite{amini2021asymptotic}, under an \textit{identifiability assumption}, it is shown that the quantum trajectory will select one of the minimal invariant subspaces at large time, assuming that there is no transient part. 

Denote $\mathcal{H} = \bigoplus_\alpha \mathcal{V}_\alpha$ an orthogonal decomposition of the Hilbert space into minimal enclosures and $\rho_{\mathcal{V}_\alpha}$ the associated extremal invariant states. The identifiability assumption is the following: for all $\alpha \ne \beta$, there is a $L_j$ such that
\begin{equation} \label{Eq_Identifiability_Assumption_Continuous}
    \tr \left( \left( L_j + L_j^\dag \right) \rho_{\mathcal{V}_\alpha} \right) \ne \tr \left( \left( L_j + L_j^\dag \right) \rho_{\mathcal{V}_\beta} \right).
\end{equation}

\begin{theorem}
    Assuming that the Hilbert space associated to the quantum trajectory has no transient part and that there exists a decomposition into minimal enclosures that satisfies the identifiability assumption (\ref{Eq_Identifiability_Assumption_Continuous}), then this decomposition is the only possible one.
\end{theorem}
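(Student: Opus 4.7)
The plan is to argue by contradiction: if the given decomposition $\mathcal{H} = \bigoplus_\alpha \mathcal{V}_\alpha$ is not unique, I would extract from it a partial isometry $Q$ linking two of the enclosures and then force $Q$ to commute with every jump operator. Combined with the relation $\rho_{\mathcal{V}_\beta} = Q \rho_{\mathcal{V}_\alpha} Q^\dag$ of Corollary \ref{Cor_Non_Unique_Minimal_Enclosures}, this commutation would make $\tr((L_j + L_j^\dag) \rho_{\mathcal{V}_\alpha})$ and $\tr((L_j + L_j^\dag) \rho_{\mathcal{V}_\beta})$ coincide, contradicting the identifiability assumption.

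Concretely, Proposition \ref{Prop_Non_Unique_Decomp} supplies two distinct, orthogonal enclosures $\mathcal{V}_\alpha, \mathcal{V}_\beta$ of the given decomposition whose sum is not uniquely decomposable, and Proposition \ref{Prop_Partial_Isometry} then produces the partial isometry $Q$ with $Q^\dag Q = \mathbbm{1}_{\mathcal{V}_\alpha}$ and $Q Q^\dag = \mathbbm{1}_{\mathcal{V}_\beta}$. The no-transient-part hypothesis gives $P_\mathcal{R} = \mathbbm{1}_\mathcal{H}$, so the identity $S_t(Q) = Q$ established inside the proof of Proposition \ref{Prop_Partial_Isometry} upgrades to $\Phi_t^*(Q) = Q$ for all $t \ge 0$, hence $\mathcal{L}^*(Q) = 0$ and, by Hermitian conjugation, $\mathcal{L}^*(Q^\dag) = 0$. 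Moreover, $Q^\dag Q = P_{\mathcal{V}_\alpha}$ and $Q Q^\dag = P_{\mathcal{V}_\beta}$ project onto enclosures, so by Proposition \ref{Prop_Cut_Off_Inv_Proj} both are also annihilated by $\mathcal{L}^*$.

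The central, and I believe hardest, step is passing from these three ``$\mathcal{L}^* = 0$'' identities to the commutations $[L_j, Q] = 0$ and $[L_j, Q^\dag] = 0$. My plan is to invoke the standard GKSL dissipation identity
\begin{equation*}
    \mathcal{L}^*(A^\dag B) - \mathcal{L}^*(A^\dag) B - A^\dag \mathcal{L}^*(B) = \sum_j [L_j, A]^\dag [L_j, B],
\end{equation*}
obtained by a direct expansion of $\mathcal{L}^*$ in which the Hamiltonian contributions cancel by the Leibniz rule for $[H, \cdot]$. Applied with $A = B = Q$, the left-hand side vanishes by the three previous facts, and the positivity of $\sum_j [L_j, Q]^\dag [L_j, Q]$ forces $[L_j, Q] = 0$ for every $j$; running the same argument with $Q$ replaced by $Q^\dag$ yields $[L_j, Q^\dag] = 0$.

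To close the argument, I would use $\rho_{\mathcal{V}_\beta} = Q \rho_{\mathcal{V}_\alpha} Q^\dag$, cyclicity of the trace, and the commutations just obtained to rewrite $Q^\dag L_j Q = L_j Q^\dag Q = L_j P_{\mathcal{V}_\alpha}$ and similarly for $L_j^\dag$. This immediately gives
\begin{equation*}
    \tr\bigl((L_j + L_j^\dag)\, \rho_{\mathcal{V}_\beta}\bigr) = \tr\bigl((L_j + L_j^\dag)\, \rho_{\mathcal{V}_\alpha}\bigr)
\end{equation*}
for every $j$, directly contradicting (\ref{Eq_Identifiability_Assumption_Continuous}) and completing the proof.
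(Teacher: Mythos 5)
Your proof is correct and follows essentially the same route as the paper: contraposition, extraction of the partial isometry $Q$ fixed by the adjoint semigroup (using that $\mathcal{D}=\{0\}$ makes the cut off and adjoint semigroups coincide), commutation of $Q$ with the jump operators, and then equality of the two traces via $\rho_{\mathcal{V}_\beta}=Q\rho_{\mathcal{V}_\alpha}Q^\dag$ and cyclicity. The only difference is that where the paper simply cites Theorem~7.2 of Wolf's notes for $[L_j,Q]=0$, you reprove that step from scratch via the GKSL dissipation identity applied to $A=B=Q$ (using $\mathcal{L}^*(Q)=\mathcal{L}^*(Q^\dag)=\mathcal{L}^*(Q^\dag Q)=0$ and positivity of $\sum_j[L_j,Q]^\dag[L_j,Q]$), which makes the argument self-contained.
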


\begin{proof}
    By contraposition, let us assume that the decomposition is not unique. Then, according to Corollary \ref{Cor_Non_Unique_Minimal_Enclosures} there exist $\alpha \ne \beta$ and a partial isometry $Q$ such that:
    \begin{itemize}
        \item $\rho_{\mathcal{V}_\alpha} = Q \rho_{\mathcal{V}_\beta} Q^\dag$,
        \item $Q^\dag Q = \mathbbm{1}_{\mathcal{V}_\beta}$,
        \item $Q$ is a fixed point of the cut off semigroup.
    \end{itemize}
    First, note that if we assume $\mathcal{D} = \{ 0 \}$, the cut off semigroup and the adjoint semigroup are the same. In this case, by Theorem 7.2 in \cite{wolf2012quantum}, $Q$ commutes with the $L_j$ appearing in the unraveling of the Lindbladian. Therefore,
    \begin{equation*} \begin{split}
        \tr \left( \left( L_j + L_j^\dag \right) \rho_{\mathcal{V}_\alpha} \right) &= \tr \left( \left( L_j + L_j^\dag \right) \rho_{\mathcal{V}_\beta} Q^\dag Q \right) \\
        &= \tr \left( \left( L_j + L_j^\dag \right) \rho_{\mathcal{V}_\beta} \right).
    \end{split} \end{equation*}
    So, the identifiability assumption cannot be satisfied.
\end{proof}

\subsection{Discrete time without transient subspace}

Until now, we have only considered continuous-time semigroups. However, the same decomposition also exists in discrete time, as demonstrated in \cite{carbone2016irreducible}, and we have the same results of convergence for quantum trajectories as well. In \cite{amini2024exponential}, assuming there is no transient part, the authors suggest an \textit{optimal identifiability assumption}, meaning that it is necessary and sufficient to have selection into one minimal enclosure as time goes to infinity, see the remark associated with \cite[Proposition 2.1]{amini2024exponential}. It takes the following form.

Consider a quantum channel $\Phi = \sum_j V_j \cdot V_j^\dag$ and an orthogonal decomposition of the associated Hilbert space $\mathcal{H} = \bigoplus_\alpha \mathcal{V}_\alpha$ with $\rho_{\mathcal{V}_\alpha}$ as before. Then, for all $\alpha \ne \beta$, there exist a sequence $I = \left( i_1, \dots, i_p \right)$ such that:
\begin{equation} \label{Eq_Identifiability_Assumption_Discrete}
    \tr \left( V_I \rho_{\mathcal{V}_\alpha} V_I^\dag \right) \ne \tr \left( V_I \rho_{\mathcal{V}_\beta} V_I^\dag \right),
\end{equation}
where $V_I = V_{i_p} \dots V_{i_1}$.

\begin{theorem}
    Assuming that the Hilbert space associated to the quantum trajectory has no transient part and that there exists a decomposition into minimal enclosures that satisfies the optimal identifiability assumption (\ref{Eq_Identifiability_Assumption_Discrete}), then this decomposition is the only possible one.
\end{theorem}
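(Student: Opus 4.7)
The plan is to mirror the continuous-time proof by contraposition: suppose the decomposition is not unique. By Corollary~\ref{Cor_Non_Unique_Minimal_Enclosures}, which extends to the discrete-time setting, there exist two distinct minimal enclosures $\mathcal{V}_\alpha, \mathcal{V}_\beta$ and a partial isometry $Q$ satisfying $Q^\dag Q = \mathbbm{1}_{\mathcal{V}_\beta}$, $Q Q^\dag = \mathbbm{1}_{\mathcal{V}_\alpha}$, $\rho_{\mathcal{V}_\alpha} = Q \rho_{\mathcal{V}_\beta} Q^\dag$, and such that $Q$ is a fixed point of the discrete cut off map $\mathcal{S}(A) = P_{\mathcal{R}} \Phi^*(A) P_{\mathcal{R}}$. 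The assumption $\mathcal{D} = \{0\}$ then forces $\mathcal{S} = \Phi^*$, so $\Phi^*(Q) = Q$.

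The next step is to show that $Q$ commutes with every Kraus operator $V_j$ in the representation $\Phi = \sum_j V_j \cdot V_j^\dag$. This is the discrete-time analogue of Theorem~7.2 in~\cite{wolf2012quantum}: note that $Q^\dag Q = P_{\mathcal{V}_\beta}$ and $Q Q^\dag = P_{\mathcal{V}_\alpha}$ project onto enclosures and are therefore also fixed points of $\Phi^*$. The Kadison--Schwarz inequality $\Phi^*(Q^\dag Q) \ge \Phi^*(Q^\dag)\Phi^*(Q)$ (and its companion with $Q$ and $Q^\dag$ swapped) then collapses to equality, so $Q$ lies in the multiplicative domain of $\Phi^*$. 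Since the absence of a transient subspace guarantees a faithful invariant state, the multiplicative domain coincides with the commutant $\{V_j, V_j^\dag\}'$, and I conclude that $[V_j, Q] = 0$ for every $j$.

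A straightforward induction then yields $V_I Q = Q V_I$ for every product $V_I = V_{i_p} \cdots V_{i_1}$. Combining this with $\rho_{\mathcal{V}_\alpha} = Q \rho_{\mathcal{V}_\beta} Q^\dag$, the cyclicity of the trace, and the fact that $\mathcal{V}_\beta$ being an enclosure implies each $V_j$ preserves $\mathcal{V}_\beta$ (so $V_I \rho_{\mathcal{V}_\beta} V_I^\dag$ is supported in $\mathcal{V}_\beta$ and hence unchanged by left multiplication by $P_{\mathcal{V}_\beta} = Q^\dag Q$), I obtain
\begin{equation*}
\tr\!\left( V_I \rho_{\mathcal{V}_\alpha} V_I^\dag \right) = \tr\!\left( Q V_I \rho_{\mathcal{V}_\beta} V_I^\dag Q^\dag \right) = \tr\!\left( Q^\dag Q \, V_I \rho_{\mathcal{V}_\beta} V_I^\dag \right) = \tr\!\left( V_I \rho_{\mathcal{V}_\beta} V_I^\dag \right)
\end{equation*}
for every word $I$, in direct contradiction with the optimal identifiability assumption~(\ref{Eq_Identifiability_Assumption_Discrete}).

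The main obstacle is the commutation step $[V_j, Q] = 0$: it does not follow from $\Phi^*(Q) = Q$ alone and genuinely requires the additional fixed-point information on $Q^\dag Q$ and $Q Q^\dag$, together with the absence of a transient subspace, in order to invoke the multiplicative-domain characterisation of the fixed-point algebra. Care is also needed because the Kraus decomposition is non-unique; the commutation must be stated in a representation-independent form (or equivalently, through the intrinsic multiplicative domain), so that the subsequent trace identity applies to every $V_j$ appearing in the identifiability condition.
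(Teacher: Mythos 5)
Your proof is correct and follows essentially the same route as the paper: contraposition, extraction of the fixed-point partial isometry $Q$ from Corollary~\ref{Cor_Non_Unique_Minimal_Enclosures}, commutation of $Q$ with the Kraus operators via the structure of the fixed-point algebra of a channel with a faithful invariant state (the paper simply cites Theorem~6.13 of \cite{wolf2012quantum} here, whose content you re-derive through the Kadison--Schwarz/multiplicative-domain argument), and then the trace identity contradicting (\ref{Eq_Identifiability_Assumption_Discrete}). The extra details you supply on the commutation step and on why $P_{\mathcal{V}_\beta}$ absorbs into the trace are consistent with, and merely flesh out, the paper's one-line proof.
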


\begin{proof}
    The proof is the same as in the previous case. Simply replace Theorem 7.2 in \cite{wolf2012quantum} with Theorem 6.13 in \cite{wolf2012quantum} to match the discrete-time assumption.
\end{proof}

\vspace{2mm}

It is natural to wonder if this assumption is equivalent to the uniqueness of the decomposition of the Hilbert space. However, it is not the case, as shown in the following: take a rotation matrix $U \in \mathcal{M}_2 (\mathbb{C})$ different from the identity and define the quantum channel $\Phi := U \cdot U^\dag$. There exists $\theta \in ]0, 2 \pi [$ such that $U = \begin{pmatrix} \cos{\theta} & - \sin{\theta} \\ \sin{\theta} & \cos{\theta} \end{pmatrix}$. Let $\rho \in \mathcal{D}(\mathbbm{C}^2)$
be a density matrix as in Section \ref{Section_App_Dim_2}. Then:
\begin{equation*} \begin{split}
    & \langle e_1 | \Phi (\rho) | e_1 \rangle = a \cos^2{\theta} - \left( b + \overline{b} \right) \cos{\theta} \sin{\theta} + c \cos^2{\theta}, \\
    & \langle e_1 | \Phi (\rho) | e_2 \rangle = (a - c) \cos{\theta} \sin{\theta} + b \cos^2{\theta} - \overline{b} \sin^2{\theta}, \\
    & \langle e_2 | \Phi (\rho) | e_1 \rangle = \overline{\langle e_1 | \Phi (\rho) | e_2 \rangle}, \\
    & \langle e_2 | \Phi (\rho) | e_2 \rangle = a \sin^2{\theta} + \left( b + \overline{b} \right) \cos{\theta} \sin{\theta} - c \cos^2{\theta}.
\end{split} \end{equation*}
After computation, the invariant states of $\Phi$ are of the form:
\begin{equation*}
    \rho_{inv} = \begin{pmatrix} 1/2 & \imath x \\ - \imath x & 1/2 \end{pmatrix}, \hspace{2mm} \mathrm{with} \hspace{2mm} x \in ]-1/2, 1/2[.
\end{equation*}
These are exactly the convex combinations of the two following extremal invariant states:
\begin{equation*}
    \rho_\alpha = | \psi_\alpha \rangle \langle \psi_\alpha | \hspace{2mm} \mathrm{and} \hspace{2mm} \rho_\beta = | \psi_\beta \rangle \langle \psi_\beta |,
\end{equation*}
with 
$| \psi_\alpha \rangle = \begin{pmatrix} 1 & \imath \end{pmatrix}^\top / \sqrt{2} \hspace{2mm} \mathrm{and} \hspace{2mm} | \psi_\beta \rangle = \begin{pmatrix} 1 & - \imath \end{pmatrix}^\top / \sqrt{2}$.

Therefore, the Hilbert space $\mathbb{C}^2$ admits only one decomposition: $\mathbb{C}^2 = \Span{ (| \psi_\alpha \rangle)} \oplus \Span{ (| \psi_\beta \rangle)}$. 
Still, 
\begin{equation*}
    \forall p \in \mathbb{N}^*, \ \tr \left( U^p \rho_\alpha \left( U^p \right)^\dag \right) = \tr \left( U^p \rho_\beta \left( U^p \right)^\dag \right) = 1.
\end{equation*}


\section{CONCLUSIONS AND FUTURE WORKS}

In this paper, we propose a review of \cite{baumgartner2012structures} and \cite{carbone2016irreducible}. To make these results more accessible, we sometimes simplify them or provide new proofs. Moreover, we apply these results to minimal OQRW, confirming the robustness of the theory, and then to quantum trajectories, linking the assumptions needed for selection in the long-time limit to the uniqueness of the decomposition. In a future paper currently in progress, we investigate the generalization of the latter result to the case where the decomposition includes the transient part and additionally studying the selection of invariant subspaces and their feedback stabilization, extending the work of \cite{amini2024exponential,benoist2024exponentially} to continuous time. Another promising application of the decomposition is in the reservoir or dissipation engineering \cite{langbehn2024dilute,wolf2012quantum}, which is in our research lines.



\bibliographystyle{plain}  
\bibliography{refs}

\begin{thebibliography}{10}

\bibitem{amini2021asymptotic}
N.~H. Amini, M.~Bompais, and C.~Pellegrini.
\newblock On asymptotic stability of quantum trajectories and their cesaro
  mean.
\newblock {\em Journal of Physics A: Mathematical and Theoretical},
  54(38):385304, 2021.

\bibitem{amini2024exponential}
N.~H. Amini, M.~Bompais, and C.~Pellegrini.
\newblock Exponential selection and feedback stabilization of invariant
  subspaces of quantum trajectories.
\newblock {\em SIAM Journal on Control and Optimization}, 62(5):2834--2857,
  2024.

\bibitem{anderson2012continuous}
W.~J. Anderson.
\newblock {\em Continuous-time Markov chains: An applications-oriented
  approach}.
\newblock Springer Science \& Business Media, 2012.

\bibitem{attal2012open}
S.~Attal, F.~Petruccione, C.~Sabot, and I.~Sinayskiy.
\newblock Open quantum random walks.
\newblock {\em Journal of Statistical Physics}, 147(4):832--852, 2012.

\bibitem{baumgartner2012structures}
B.~Baumgartner and H.~Narnhofer.
\newblock The structures of state space concerning quantum dynamical
  semigroups.
\newblock {\em Reviews in Mathematical Physics}, 24(02):1250001, 2012.

\bibitem{belavkin1983theory}
V.~Belavkin.
\newblock Theory of the control of observable quantum systems.
\newblock {\em Automatica and Remote Control}, 44(2):178--188, 1983.

\bibitem{benoist2021invariant}
T.~Benoist, M.~Fraas, Y.~Pautrat, and C.~Pellegrini.
\newblock Invariant measure for stochastic {S}chr{\"o}dinger equations.
\newblock In {\em Annales Henri Poincar{\'e}}, volume~22, pages 347--374.
  Springer, 2021.

\bibitem{benoist2024exponentially}
T.~Benoist, L.~Greggio, and C.~Pellegrini.
\newblock Exponentially fast selection of sectors for quantum trajectories
  beyond non demolition measurements.
\newblock {\em arXiv preprint arXiv:2407.18864}, 2024.

\bibitem{benoist2014large}
T.~Benoist and C.~Pellegrini.
\newblock Large time behavior and convergence rate for quantum filters under
  standard non demolition conditions.
\newblock {\em Communications in Mathematical Physics}, 331(2):703--723, 2014.

\bibitem{carbone2016irreducible}
R.~Carbone and Y.~Pautrat.
\newblock Irreducible decompositions and stationary states of quantum channels.
\newblock {\em Reports on Mathematical Physics}, 77(3):293--313, 2016.

\bibitem{carbone2016open}
R.~Carbone and Y.~Pautrat.
\newblock Open quantum random walks: reducibility, period, ergodic properties.
\newblock In {\em Annales Henri Poincar{\'e}}, volume~17, pages 99--135.
  Springer, 2016.

\bibitem{doherty2000quantum}
A.~C. Doherty, S.~Habib, K.~Jacobs, H.~Mabuchi, and S.~M. Tan.
\newblock Quantum feedback control and classical control theory.
\newblock {\em Physical Review A}, 62(1):012105, 2000.

\bibitem{doherty1999feedback}
A.~C. Doherty and K.~Jacobs.
\newblock Feedback control of quantum systems using continuous state
  estimation.
\newblock {\em Physical Review A}, 60(4):2700, 1999.

\bibitem{langbehn2024dilute}
J.~Langbehn, K.~Snizhko, I.~Gornyi, G.~Morigi, Y.~Gefen, and C.~P. Koch.
\newblock Dilute measurement-induced cooling into many-body ground states.
\newblock {\em PRX Quantum}, 5(3):030301, 2024.

\bibitem{pellegrini2014continuous}
C.~Pellegrini.
\newblock Continuous time open quantum random walks and non-markovian lindblad
  master equations.
\newblock {\em Journal of Statistical Physics}, 154(3):838--865, 2014.

\bibitem{wiseman2009quantum}
H.~M. Wiseman and G.~J. Milburn.
\newblock {\em Quantum measurement and control}.
\newblock Cambridge university press, 2009.

\bibitem{wolf2012quantum}
M.~M. Wolf.
\newblock Quantum channels and operations-guided tour.
\newblock 2012.

\end{thebibliography}

\end{document}